\newtheorem{theorem}{Theorem}
\theoremstyle{definition}
\newtheorem{claim}{Claim}
\newtheorem{definition}{Definition}
\algnewcommand\algprocedure{\textbf{Procedure:}}
\algnewcommand\Procedurename{\item[\underline{\algprocedure}]}
\algnewcommand\algmain{\textbf{Main:}}
\algnewcommand\Main{\item[\underline{\algmain}]}
\algnewcommand\algorithmicinput{\textbf{Input:}}
\algnewcommand\Input{\item[\algorithmicinput]}
\algnewcommand\algorithmicoutput{\textbf{Output:}}
\algnewcommand\Output{\item[\algorithmicoutput]}
\renewenvironment{proof}[1][\proofname] {\pushQED{\qed}\normalfont\topsep\z@\@plus0\p@\relax\trivlist\item[\hskip\labelsep\bfseries#1\@addpunct{:}]\ignorespaces}{\popQED\endtrivlist\@endpefalse}
\def\thm@space@setup{%
	\thm@preskip=8pt plus 2pt minus 4pt
	\thm@postskip=\thm@preskip % or whatever, if you don't want them to be equal
}
\renewcommand{\ALG@beginalgorithmic}{\small}
\titlespacing\section{0pt}{12pt plus 2pt minus 2pt}{-1pt plus1pt minus 1pt}
\titlespacing\subsection{0pt}{12pt plus 2pt minus 2pt}{-1pt plus 1pt minus 1pt}
\titlespacing\subsubsection{0pt}{12pt plus 2pt minus 2pt}{-1pt plus 1pt minus 1pt}
\setlist{topsep = 0pt plus1pt}
\newcommand{\proofcase}[1]{\par\smallskip\noindent{\textbf{#1}}}
\renewcommand{\hat}{\widehat}
\newcommand{\bg}{{\mathbb G}}
\newcommand{\mc}{{\mathcal C}}
\newcommand{\ml}{{\mathcal L}}
\newcommand{\mm}{{\mathcal M}}
\newcommand{\calC}{\mathcal{C}}
\newcommand{\unrev}{\textrm{unrev}}
\newcommand{\rev}{\textrm{rev}}
\newcommand{\sig}{\textrm{sig}}
\newcommand{\extsig}{\textrm{extsig}}
\newcommand{\optsig}{\textrm{sig}^\textrm{opt}}
\newcommand{\equivsig}{=}
\newcommand{\calM}{\mathcal{M}}
\newcommand{\red}[1]{{\color{red} #1 }}
\algnewcommand{\LeftComment}[1]{\State \(\triangleright\) #1}
\newenvironment{proof*}[1][\proofname]{
	
	\begin{proof}[#1]}{\end{proof}}
\def\npoSecProofsinApp{0}
\title{Necessarily Optimal One-Sided Matchings}
 \author{
    Hadi Hosseini\footnote{College of Information Sciences and Technology, Penn State University.\ \ttfamily hadi@psu.edu
 	}
 	\and
 	Vijay Menon\footnote{David R.\ Cheriton School of Computer Science, University of Waterloo.\ \ttfamily vijay.menon@uwaterloo.ca}
 	\and 
 	Nisarg Shah\footnote{Department of Computer Science, University of Toronto.\ \ttfamily nisarg@cs.utoronto.ca
 	}
 	\and
 	Sujoy Sikdar\footnote{Department of Computer Science, Binghamton University.\ \ttfamily ssikdar@binghamton.edu
 	}
}
\date{}
\begin{document}
	
	\maketitle
	\begin{abstract}
 We study the classical problem of matching $n$ agents to $n$ objects, where the agents have ranked preferences over the objects. We focus on two popular desiderata from the matching literature: Pareto optimality and rank-maximality. Instead of asking the agents to report their complete preferences, our goal is to learn a desirable matching from partial preferences, specifically a matching that is necessarily Pareto optimal (NPO) or necessarily rank-maximal (NRM) under any completion of the partial preferences. We focus on the top-$k$ model in which agents reveal a prefix of their preference rankings. We design efficient algorithms to check if a given matching is NPO or NRM, and to check whether such a matching exists given top-$k$ partial preferences. We also study online algorithms for eliciting partial preferences adaptively, and prove bounds on their competitive ratio. %Our simulations show that very little elicitation is needed in practice to compute a good matching.
\end{abstract}

\section{Introduction}\label{sec:intro}
Resource allocation is a fundamental problem in artificial intelligence and multi-agent systems. One particular type of resource allocation deals with assigning a number of indivisible objects to agents according to their preferences. These problems have given rise to a wide array of research in economics~\citep{moulin2004fair} and theoretical computer science~\citep{david2013algorithmics}. 

% Specifically, the problem of assigning a number of indivisible objects to agents with different preferences has been widely studied not only in multi-agent systems, but also in economics~\citep{moulin2004fair} and theoretical computer science~\citep{david2013algorithmics}. 

The focus of our work is the special case of allocating $n$ objects to $n$ agents (so each agent is matched to a single object), which models many real-world applications (see, e.g., \cite{hylland1979efficient,abdulkadirouglu1999house}). For instance, imagine allocating office spaces to faculty members in a building. Instead of asking each faculty member to report a full preference ranking over the available offices, the department head may ask them to reveal their top choices, and then, if need be, she may ask individual faculty members to reveal their next best choices, and so on. The goal is to come up with a matching that \emph{necessarily} satisfies some form of ``economic efficiency'' while asking as few queries as possible. That is, it must satisfy the economic efficiency property regardless of the parts of the preferences that were not elicited, because otherwise some faculty may find the final matching undesirable.

What form of economic efficiency might one want for a matching? Luckily, decades of research in matching theory offers two prominent desiderata: \emph{Pareto optimality}~\cite{shapley1974cores,cirillo2012economics} and \textit{rank-maximality}~\cite{irving2003greedy,irving2006rank}. Informally, Pareto optimality requires that no other matching be able to make some agents better off without making any agent worse off.
Despite its wide use, Pareto optimality may be a weak guarantee in many practical settings. Hence, a stronger guarantee called rank-maximality is often used in practical applications such as assigning papers to referees~\cite{garg2010assigning}, assigning rented resources to customers~\cite{abraham2006assignment}, and assigning students to schools~\cite{abraham2009matching}. Informally, rank-maximality requires matching as many agents to their top choice as possible, subject to that matching as many agents to their second choice as possible, and so on. 
% Our goal is to elicit partial preferences of the agents and learn a matching that is guaranteed to be \emph{necessarily Pareto optimal} (NPO) or \emph{necessarily rank-maximal} (NRM). 

The study of these axioms, along with axioms of fairness and incentive-compatibility, has led to the design of elegant rules such as random priority and probabilistic serial~\cite{bogomolnaia2001new,che2010asymptotic}, which have been made easily accessible by not-for-profit endeavors such as MatchU.ai (\url{www.matchu.ai}). However, in many real-life situations, one rarely has access to complete preferences which can simply be fed to these rules. More often than not, the problem at hand is to come up with a desirable matching of agents to objects with only partial information about agents' preferences. While the role of partial preferences has been well-explored in the sister problem known as \emph{two-sided matching}~\cite{rastegari2013two,liu2014stable}, in which agents are matched to other agents (for example matching roommates, kidney donors to patients, or men to women), it has been significantly understudied for one-sided matching. This is the primary focus of our work. 
In particular, we consider the following  three research questions.
\begingroup
% \addtolength\leftmargini{-0.25in}
\begin{quote}
    \emph{Given partial preferences, can we efficiently check if a given matching is NPO or NRM? Can we efficiently check if an NPO or NRM matching exists (and if so, find one)? 
    % What is the minimum number of queries needed for online elicitation to guarantee NPO or NRM matchings?
    How much online elicitation of preferences is needed to find an NPO or NRM matching?
    }
\end{quote}
\endgroup

\subsection{Our Results}\label{sec:results}
We study the \emph{next-best} query model which allows, in each query, asking one agent to reveal her next best choice. Partial preferences elicited under this model are referred to as \emph{top}-$k$ preferences as each agent reveals a prefix of her preference ranking consisting of the top $k$ items (where $k$ can be different for each agent). As mentioned above, our goal is to design online algorithms that ask few queries and learn a matching that is necessarily Pareto optimal or necessarily rank-maximal w.r.t.\ the elicited partial preferences. We study their performance in terms of their competitive ratio~\cite{borodin2005online}. 

Note that these algorithms need to check if there already exists a matching that is NPO or NRM w.r.t.\ the elicited preferences in order to decide if further elicitation is required. Along the way, it is also useful to know if a \emph{given} matching is NPO or NRM w.r.t.\ the given partial preferences. Thus, we also study these two questions for NPO and NRM. 

For necessary Pareto optimality, we show that checking whether a given matching is NPO reduces to computing the maximum weight matching in a bipartite graph, which can be solved efficiently. We also show that there exists an NPO matching if and only if there exists a matching that matches at least $n-1$ agents to objects they have revealed, where $n$ is the number of agents; this can be checked in polynomial-time, and if this is the case, our algorithm efficiently finds an NPO matching. Our main result is an online algorithm for finding an NPO matching that has a competitive ratio of $O(\sqrt{n})$. By proving a matching lower bound, we show that our algorithm is asymptotically optimal. %Our algorithm is natural: it starts off by asking all agents their next best object, and at a certain point, switches to only querying a carefully selected smaller set of agents. \ns{Emphasize as the main result}

For necessary rank-maximality, we design efficient algorithms for checking whether a given matching is NRM. For checking whether an NRM matching exists; unlike in the case of NPO, these questions do not seem to reduce to checking a simple analytical condition. In the supplementary material, we show that a simple characterization can be obtained in the special case where each agent has revealed exactly the same number of objects. For online elicitation, we show a constant lower bound on the competitive ratio, and conjecture that the optimal competitive ratio is in fact constant, unlike in the case of NPO.

\subsection{Related Work}\label{sec:related}
The problem of matching $n$ agents to $n$ objects models a number of scenarios and is well-studied, sometimes as the house allocation or one-to-one object allocation problem (e.g., \cite{hylland1979efficient,abdulkadirouglu1998random,abra04,irving2003greedy}). There are two approaches to learning a desirable matching given only partial information about agents' preferences. One approach is to consider the worst case scenario over missing preferences; imposing a property $X$ in the worst case leads to the ``necessarily $X$'' concept. This concept has been considered in other contexts such as in voting \cite{xia08} and in one-sided matching problems where an agent may be assigned to more than one good \cite{bou10,aziz15,aziz19}. However, in the latter case, even having access to agents' full preference rankings over individual goods is not sufficient for checking Pareto optimality, which leads \citet{aziz19} to consider NPO given full rankings. In contrast, in our one-to-one matching setting, full preference rankings over individual goods are sufficient to check whether a given matching is PO, which leads us to consider partial (top-$k$) rankings.

Another approach is to assume that the underlying preferences are drawn from a known distribution~\cite{rastegari2013two}, and consider the probability of satisfying $X$~\cite{aziz2019pareto}. We note that asking whether a matching satisfies $X$ with probability $1$ often coincides with the ``necessarily $X$'' concept. 

We focus on the top-$k$ model of partial preferences, which is commonly used in the literature~\cite{aziz2015possible,drummond2013elicitation}. However, there are also other interesting models; one such model is where the information available regarding each agent's preferences is in the form of a set of pairwise comparisons among objects~\cite{aziz2019pareto}. In general, the goal of this line of work is to make matching algorithms more practically relevant, to which our work contributes. 

%One line of research focuses on the case where the objects are initially owned by the agents, and one is interested in a re-allocation of the objects. This is sometimes referred to as a `housing market'. In such markets, one may also wish to impose individual rationality, i.e., that each agent receive an object she prefers at least as much as the object she initially owned. %The famous top trading cycles (TTC) algorithm satisfies a strengthening of this requirement known as the core~\cite{shapley1974cores}. \todo{HH: the relation with TTC is unclear. Is this necessary to mention?}

Other than economic efficiency, researchers have also focused on notions of fairness and incentive-compatibility in matching agents to objects, which has led to the design of rules such as random priority and probabilistic serial~\cite{bogomolnaia2001new,abdulkadirouglu1998random,hosseini2018investigating}. A related line of research considers two-sided matching markets, where agents are matched to other agents. Working with partial preferences is well-explored in such markets~\cite{rastegari2013two,liu2014stable}. 

%A bit further afield, there is a large literature on preference learning~\cite{chang2000learning,balcan2012learning,ailon2010preference}, which aims to learn agents' preferences from their actions or from partially revealed preferences. One may consider using such approaches to learn the missing preferences, and then feed the complete preferences into a standard matching algorithm. In contrast, our approach is to directly learn a desirable matching from partial preferences so as to lower the computation and data-intensive costs.

	\section{Preliminaries} \label{sec:prelims}
Let $N = \{a_1, \cdots, a_n\}$ denote the set of agents and $O = \{o_1, \cdots, o_n\}$ be the set of objects.
We use $[k]$ to denote the set $\{1,\dots, k\}$. 
Thus, for $i\in [n]$, $a_i$ and $o_i$ represent agent and object $i$ respectively.
A \textit{matching} of agents to objects is a bijection $M: N\to O$. We sometimes refer to a matching as a collection of pairs $\set{(a_i, M(a_i) \mid i \in [n])}$ and use $\mathcal{M}$ to denote the set of all possible matchings.

% Let $[k]$ denote the set $\{1,\dots, k\}$. We use $N = \{a_1, \cdots, a_n\}$ to denote the set of agents and $O = \{o_1, \cdots, o_n\}$ to the denote the set of objects. For $i \in [n]$, $a_i$ is referred to as {agent} $i$ and $o_i$ is referred to as {object} $i$. Note that there are equal number of agents and objects. A \textit{matching} between objects and agents is a bijection $M: N\to O$. We sometimes refer to a matching as a collection of pairs $\set{(a_i, M(a_i) \mid i \in [n])}$ and use $\mathcal{M}$ to denote the set of all possible matchings. 

Each agent $a_i \in N$ has an underlying linear order $R_i$ over $O$, where a linear order over $O$ is a transitive, antisymmetric, and total relation on $O$. We use $\ml(O)$ to denote the set of all linear orders over $O$, and refer to $R = (R_1, \cdots, R_n) \in \ml(O)^n$  as a \textit{(complete) preference profile}. Additionally, for $i, j, k \in [n]$, if $a_i$ prefers $o_j$ to $o_k$ according to $R_i$, then we denote this by $o_j \succ_{R_i} o_k$ or $o_k \prec_{R_i} o_j$. 

We formally describe two notions of economic efficiency, namely Pareto optimality and rank-maximality.

\begin{definition}[Pareto optimality]
	Given a preference profile $R = (R_1,\ldots,R_n)$, a matching $M$ is said to be Pareto optimal w.r.t.\ $R$ if no other matching can make an agent happier without making some other agent less happy, i.e., if 
% 	\begin{multline*}
	 $ \forall M' \in \mathcal{M}: \; \left(\exists a_i \in N, M'(a_i) \succ_{R_i} M(a_i) \right) \Rightarrow\\ \left(\exists a_j \in N, M'(a_j) \prec_{R_j} M(a_j)\right).$
% 	\end{multline*}
\end{definition}

\begin{definition}[rank-maximality]\label{def:RMM}
Given a preference profile $R$ and a matching $M$, let the \textit{signature} of $M$ w.r.t. $R$, denoted $\sig_R(M)$, be the tuple $(x_1,\ldots,x_n)$, where $x_\ell$ is the number of $(a_i, o_j) \in M$ such that $o_j$ is the $\ell^{\text{th}}$ most preferred choice of $a_i$. We say that $(x_1,\ldots,x_n) \succ_{\text{sig}} (x'_1,\ldots,x'_n)$ if $\exists k \in [n]$ such that $x_k > x'_k$ and $x_i = x'_i$ for all $i < k$; this is the lexicographic comparison of signatures. A matching $M$ is rank-maximal w.r.t.\ $R$ if  $\sig_R(M)$ is maximum under $\succ_{\text{sig}}$. In words, a matching is rank-maximal if it matches max.\ number of agents to their top choice, subject to that the max.\ number of agents to their second choice, and so on.
\end{definition}

\subsection{Partial preferences}

In this paper, we study a partial preference model called \emph{top-$k$ preferences}, in which each agent $i$ may reveal a prefix of her preference ranking, that is, a ranking of her $k_i$ most favorite objects for some $k_i \in [n]$. Specifically, agent $i$ may reveal $P_i$ which is a linear order over some $L \subseteq O$ such that any object in $L$ are preferred to any object in $L \setminus O$ by the agent; we say that agent $i$ has revealed object $o$ in $P_i$ if $o \in L$. Also, we sometimes use $|P_i| = |L| = k_{i}$ to denote the size of $P_i$, i.e., the number of objects revealed in $P_i$.
We refer to $P = (P_1,\ldots,P_n)$ as a top-$k$ preference profile and say that a linear order $R_i$ is \textit{consistent} with $P_i$ if, for all $j, k \in [n]$, $o_j \succ_{P_i}  o_k \Rightarrow o_j \succ_{R_i}  o_k$. We use $\calC(P_i)$ to denote the set of linear orders over $O$ that are consistent with $P_i$, and $\calC(P)$ to denote $\calC(P_1) \times \ldots \times \calC(P_n)$.

Given a top-$k$ preference profile $P = (P_1, \cdots, P_n)$, we use  $\rev(P)$ to denote the set of all pairs $(a_i, o_j)$ such that $a_i$ has revealed object $o_j$ in $P_i$, and $\unrev(P)$ to denote $(N \times O) \setminus \rev(P)$. Throughout, whenever we refer to the size of a matching $M$ w.r.t.\ to $P$, we mean $|M \cap \rev(P)|$. 
% In general whenever we talk of a profile of preferences $P$, it over the set of agents $N$ and objects $O$, unless explicitly specified.

We are interested in the notion of ``necessarily efficient'' matchings, which are guaranteed to satisfy an efficiency notion such as Pareto optimality or rank-maximality given only a top-$k$ preference profile~\citep{GNNW14,xia08,aziz2015possible}.

\begin{definition}[necessary Pareto optimality (NPO) and necessary rank-maximality (NRM)]
	Given a top-$k$ preference profile $P$, a matching $M$ is said to be necessarily Pareto optimal (resp.\ necessarily rank-maximal) w.r.t.\ $P$ if it is Pareto optimal (resp.\ rank-maximal) w.r.t.\ any profile of linear orders that is consistent with $P$, i.e., for any $R \in \calC(P)$.
\end{definition}
% \todo{Combine these two definitions into one? More space!}
% \begin{definition}[necessary rank-maximality (NRM)]
% 	Given a top-$k$ preference profile $P$, a matching $M$ is said to be necessarily rank-maximal w.r.t.\ $P$ if it is rank-maximal w.r.t.\ any profile of linear orders that is consistent with $P$, i.e., for any $R \in \calC(P)$.
% \end{definition} 

\subsection{Query model and elicitation} \label{sec:prelims-qm}
We consider algorithms that can elicit preferences by asking agents to reveal their ``next-best'' objects---meaning if the agent has already revealed their $k-1$ most preferred objects, then it asks the agent to reveal their $k$-th most preferred object. We formally define this query model below. 

\begin{definition}[next-best query model] \label{def:qmodel1}
	For $a_i \in N$ and $k \in [n]$, query $\mathcal{Q}(a_i, k)$, which asks agent $a_i$ to reveal the object in the $k$-th position in her preference list, is a valid query in the next-best query model if query $\mathcal{Q}(a_i,j)$ has already been made for all $j \in [k-1]$.
\end{definition}

We are interested in \textit{online} elicitation algorithms that find NPO or NRM matchings in the next-best query model and measure their performance against an all-powerful optimal algorithm. An all-powerful optimal algorithm is a hypothetical offline algorithm that can identify the minimum number of queries of the form $\mathcal{Q}(a_i, j)$ that are sufficient to guarantee an NPO  or NRM matching. That is, for $i \in [n]$ and $k_i \in [n]$, the optimal algorithm can minimize $\sum_{i \in [n]} k_i$, and can claim that asking every agent $i$ for their top-$k_i$ order, $P_i$, is sufficient to compute a matching that is NPO or NRM w.r.t.\ all consistent completions of $P = (P_1, \ldots, P_n)$. 
We say that an online elicitation algorithm is \textit{$\alpha$-competitive} (or, equivalently, it achieves a competitive ratio of $\alpha$) if the number of queries it requires to ask in the worst-case across all possible instances is at most $\alpha \cdot \text{OPT}$, where $\text{OPT}$ is the number of queries asked by the optimal algorithm. 

\section{Necessarily Pareto optimal matchings} \label{sec:npo}
We begin, in \Cref{sec:checkNPO}, by presenting an algorithm that determines if a given matching is NPO w.r.t.\ to a profile of top-$k$ preferences. Following this, we provide a useful characterization for when NPO matchings exist. This in turn is used in \Cref{sec:npo-elicit} in designing an elicitation algorithm that is $2(\sqrt{n} + 1)$-competitive. We also show that no elicitation algorithm can achieve a competitive ratio of $o(\sqrt{n})$, establishing our algorithm as asymptotically optimal.

\subsection{Determining whether a matching is NPO} \label{sec:checkNPO}
We present a polynomial time algorithm to determine whether a given matching $M$ is NPO w.r.t.\ a profile of top-$k$ preferences $P$. The algorithm proceeds in two steps: In Step 1, we build a directed graph $\bg=(N, E)$, where $E$ consists of directed edges $(a_i,a_j)$ such that $M(a_j) \succ_{R_i} M(a_i)$ under some $R_i \in \mc(P_i)$. That is, there exists an edge when agent $i$ prefers the object allocated to agent $j$ over her own under some completion of her partial preferences. In Step 2, we efficiently check if this directed graph admits a cycle. If it does, matching $M$ is not NPO; otherwise, it is NPO.

\begin{restatable}{theorem}{thmchecknpo} \label{thm:checkNPO}
Given a top-$k$ preference profile $P$ and a matching $M$, there is a polynomial time algorithm to check if $M$ is NPO w.r.t.\ $P$.
\end{restatable}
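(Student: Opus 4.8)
The plan is to establish correctness of the two-step algorithm and then bound its running time; the heart of the argument is a characterization of necessary Pareto optimality purely in terms of acyclicity of the graph $\bg$ built in Step 1.

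First I would recall the classical characterization of Pareto optimality under complete, strict preferences: for a fixed linear profile $R$, the matching $M$ is Pareto optimal w.r.t.\ $R$ if and only if the directed graph on $N$ with an edge $(a_i, a_j)$ whenever $M(a_j) \succ_{R_i} M(a_i)$ is acyclic. Both directions follow by decomposing a candidate Pareto-improving matching $M'$ into the cycles of the permutation $M^{-1} \circ M'$ on $N$. The key use of strictness is that on any nontrivial cycle of this permutation each agent receives an object distinct from its own, so weak improvement forces strict improvement; hence a Pareto improvement yields a directed cycle of strict preferences, and conversely rotating objects along such a cycle produces a Pareto improvement.

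Next I would lift this to the partial-preference setting. By definition $M$ is NPO iff it is PO w.r.t.\ every completion $R \in \mc(P)$, i.e.\ iff the per-completion preference graph $\bg^R$ (edge $(a_i,a_j)$ when $M(a_j) \succ_{R_i} M(a_i)$ in that $R$) is acyclic for all $R$. I claim this is equivalent to acyclicity of $\bg$, whose edge $(a_i,a_j)$ records that $M(a_j) \succ_{R_i} M(a_i)$ holds under \emph{some} $R_i \in \mc(P_i)$. If $\bg$ is acyclic then each $\bg^R$ is a subgraph of $\bg$ (a realized strict preference is in particular a possible one), hence acyclic, so $M$ is NPO. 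For the converse, suppose $\bg$ contains a cycle; then it contains a \emph{simple} cycle $a_{i_1} \to \cdots \to a_{i_t} \to a_{i_1}$ on distinct agents. Since each edge depends only on its source agent's own preference and the cycle vertices are distinct, I can pick, independently for each $a_{i_j}$, a completion $R_{i_j} \in \mc(P_{i_j})$ realizing $M(a_{i_{j+1}}) \succ_{R_{i_j}} M(a_{i_j})$, and extend arbitrarily on the remaining agents; this yields $R \in \mc(P)$ whose graph $\bg^R$ contains the strict cycle, so $M$ is not PO w.r.t.\ $R$ and hence not NPO. This independence across the distinct cycle vertices is the crucial step, and the part I expect to be the main obstacle to state cleanly.

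Finally I would argue that $\bg$ is constructible and testable in polynomial time. Whether an edge $(a_i, a_j)$ is present is decided by a constant-time case analysis on agent $i$'s revealed prefix $P_i$: the edge is present exactly when $M(a_j)$ is revealed and ranked above $M(a_i)$ in $P_i$, when $M(a_j)$ is revealed but $M(a_i)$ is not, or when neither object is revealed by $a_i$; it is absent when $M(a_i)$ is revealed but $M(a_j)$ is not, or when both are revealed with $M(a_i) \succ_{P_i} M(a_j)$. Thus $\bg$ has at most $n^2$ edges and can be built in $O(n^2)$ time, and acyclicity can be tested in $O(n^2)$ time by a topological sort, which together with the characterization above yields the claimed polynomial-time algorithm.
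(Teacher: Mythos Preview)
Your proposal is correct and follows essentially the same approach as the paper: build the ``possible preference'' digraph $\bg$, prove $M$ is NPO iff $\bg$ is acyclic (the paper argues the two directions directly rather than via the classical complete-preference characterization, but the cycle-realization step using independence across distinct agents is identical), and observe that construction plus cycle detection is $O(n^2)$. Your edge case analysis is slightly finer than the paper's but equivalent to its rule ``if $M(a_i)$ is revealed, add $(a_i,a_j)$ only when $M(a_j)\succ_{P_i}M(a_i)$; otherwise add all outgoing edges.''
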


\begin{algorithm}[tb]
	\centering
	\noindent\fbox{%
		\begin{varwidth}{\dimexpr\linewidth-4\fboxsep-4\fboxrule\relax}
			\begin{algorithmic}[1]
				\footnotesize
				\Input Top-$k$ preference profile $P = (P_1, \ldots, P_n)$, set of agents $N$, set of objects $O$, and matching $M$.
				\Output YES if $M$ is NPO w.r.t. $P$, and NO otherwise. 
				\LeftComment{\textbf{Step 1.} Construct a directed graph $\bg =(N, E)$.}
				\State $E\gets \emptyset$
				\For{each $i \in [n]$}
					\If{$(a_i,M(a_i))\in\rev(P)$}
						\State For every agent $j\neq i$ such that $M(a_j) \succ_{P_i} M(a_i)$, add an edge $(a_i,a_j)$.
					\Else
						\State For every agent $j\neq i$, add an edge $(a_i,a_j)$.
					\EndIf
				\EndFor
				\LeftComment{\textbf{Step 2.} Check if $\bg$ has a cycle.}
				\If{$\bg = (N,E)$ has a cycle}
					\State \Return NO
				\Else
				    \State \Return YES
				\EndIf
			\end{algorithmic}
	\end{varwidth}}
	\caption{An algorithm to determine whether a given matching is NPO.}
	\label{algo:checkNPO}
\end{algorithm}

\begin{proof}
We claim that the algorithm described above---formally presented as Algorithm~\ref{algo:checkNPO}---is such an algorithm. First, let us prove its correctness. The key observation, which the reader can easily verify, is that the graph $\bg$ constructed in Step $1$ contains an edge $(a_i,a_j)$ if and only if there exists a completion $R_i \in \mc(P_i)$ under which $M(a_j) \succ_{R_i} M(a_i)$. This includes edges $(a_i,a_j)$ for which $M(a_j) \succ_{P_i} M(a_i)$ is already true under the top-$k$ preference order $P_i$. 

Suppose the algorithm returns NO. Then, $\bg$ has a cycle. Let $C$ be one such cycle. Construct a matching, $M'$, from $M$ as follows: if agent $i$ has an outgoing edge $(a_i,a_j) \in C$ in the cycle, set $M'(a_i)=M(a_j)$, and if agent $i$ is not part of cycle $C$, then set $M'(a_i)=M(a_i)$. It is easy to verify that $M'$ is a matching. 

Let $R = (R_1,\ldots,R_n) \in \mc(P)$ be a completion such that if agent $i$ has an outgoing edge $(a_i,a_j) \in C$ in the cycle, then $R_i \in \mc(P_i)$ satisfies $M'(a_i) = M(a_j) \succ_{R_i} M(a_i)$, and otherwise $R_i \in \mc(P_i)$ is chosen arbitrarily. Then, it is easy to see that for each agent $i$ who is a part of cycle $C$, we have $M'(a_i) \succ_{R_i} M(a_i)$, and for every remaining agent $i$, we have $M'(a_i) = M(a_i)$. Thus, $M'$ Pareto-dominates $M$ under the completion $R \in \mc(P)$, which implies that $M$ is not NPO w.r.t. $P$, as desired. 

Conversely, suppose $M$ is not NPO w.r.t. $P$. Hence, there exists a matching $M'$ and a completion $R \in \mc(P)$ such that $M'$ Pareto-dominates $M$ under $R$. We claim that $\bg$ must have a cycle, and the algorithm therefore must return NO. To see this, consider the graph $\bg' = (N,E')$ where, for $a_i \neq a_j$, $(a_i,a_j) \in E'$, if $M'$ assigns to agent $i$ the object that was assigned to agent $j$ under $M$, i.e., if $M'(a_i) = M(a_j) \neq M(a_i)$. Note that in $\bg'$, each agent $i$ either has exactly one incoming and one outgoing edge, or no incident edges. Hence, $\bg'$ is a union of disjoint cycles. In particular, $\bg'$ has at least one cycle. 

However, since $M'$ Pareto-dominates $M$ under $R$, for every edge $(a_i,a_j) \in E'$, we have that $M'(a_i) = M(a_j) \succ_{R_i} M(a_i)$. Hence, by the observation above, edge $(a_i,a_j)$ must also exist in $\bg$. Thus, $\bg'$ is a subgraph of $\bg$. Since $\bg'$ has at least one cycle, so does $\bg$, as desired. 

Finally, Algorithm~\ref{algo:checkNPO} runs in polynomial time because constructing $\bg$ and checking whether $\bg$ has a cycle can be done in $O(n^2)$ time (the latter using depth-first search).
\end{proof}

At a high level, correctness of the algorithm follows from the observation that if there is a cycle in the directed graph constructed above, then one can construct a complete preference profile $R \in \mc(P)$ such that each agent in the cycle strictly prefers the object allocated to the next agent in the cycle under $M$. Then, trading objects along this cycle would result in a matching $M'$ that, under this completion $R$, Pareto-dominates $M$, establishing that $M$ is not NPO.  

\subsection{Computing an NPO matching, when it exists} \label{sec:npo-iff}
Next, we find a necessary and sufficient condition for a given top-$k$ preference profile $P$ to admit an NPO matching. In the case that an NPO matching exists, we also provide an algorithm to compute one in polynomial time. 

\begin{theorem} \label{thm:npo-iff}
    A top-$k$ preference profile $P$ admits an NPO matching if and only if there is a matching of size at least $n-1$ w.r.t.\ $P$. Moreover, there is a polynomial time algorithm to determine whether a top-$k$ preference profile admits an NPO matching, and to compute an NPO matching if it exists.
\end{theorem}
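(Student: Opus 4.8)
The plan is to work through the graph-theoretic characterization underlying \Cref{thm:checkNPO}: a matching $M$ is NPO w.r.t.\ $P$ if and only if the digraph $\bg(M)$ constructed in Step~1 of \Cref{algo:checkNPO} is acyclic. Recall the edge rule there: if $(a_i, M(a_i)) \in \rev(P)$ then $(a_i, a_j) \in E$ exactly when $M(a_j) \succ_{P_i} M(a_i)$, whereas if $(a_i, M(a_i)) \notin \rev(P)$ then $(a_i, a_j) \in E$ for \emph{every} $j \neq i$. I would also record the translation that, since the size of $M$ w.r.t.\ $P$ is the number of agents matched to an object they have revealed, ``$M$ has size at least $n-1$'' means exactly ``at most one agent is matched to an unrevealed object''.

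For the forward (necessity) direction I would argue by contradiction. Suppose $M$ is NPO but has size at most $n-2$, so at least two agents $a_i, a_j$ are matched to unrevealed objects. By the second edge rule, both $a_i$ and $a_j$ have outgoing edges to every other vertex; in particular $(a_i,a_j)$ and $(a_j,a_i)$ are both present, forming a $2$-cycle and contradicting acyclicity of $\bg(M)$. Hence every NPO matching already has size at least $n-1$, so such a matching exists whenever an NPO matching does.

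For the converse, together with the construction, I would start from an arbitrary bijection $M$ of size at least $n-1$ and repair it by trading along cycles: while $\bg(M)$ contains a cycle $C$, reassign each agent on $C$ to the object held by its successor on $C$, exactly as in the proof of \Cref{thm:checkNPO}. Termination would be tracked via the potential $\Phi(M) = \sum_i \phi_i(M(a_i))$, where $\phi_i(o)$ is the rank of $o$ in $P_i$ if $a_i$ has revealed $o$, and $n+1$ otherwise. Two observations drive the argument: (i) an agent matched to a revealed object that has an outgoing edge on $C$ trades to its successor's object, which must itself be revealed by that agent (in any consistent completion every object preferred to a revealed object is revealed), so such an agent strictly lowers its $\phi_i$ and no agent ever moves onto an unrevealed object except possibly the unique agent already on one --- this preserves the invariant that at most one agent sits on an unrevealed object, i.e.\ size remains at least $n-1$; and (ii) because $C$ has length at least two while at most one agent is on an unrevealed object, at least one agent on $C$ trades strictly up among revealed objects, so $\Phi$ strictly decreases at each step. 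Since $\Phi$ is a nonnegative integer bounded by $n(n+1)$, the loop halts after $O(n^2)$ trades at a matching with acyclic $\bg$, hence an NPO matching.

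Finally, for the decision procedure, testing whether a matching of size at least $n-1$ exists is a maximum-bipartite-matching computation on the graph of revealed pairs $(a_i,o_j)$, which is polynomial; if the maximum has size at least $n-1$ I would extend it arbitrarily to a bijection and run the trading loop, each iteration costing $O(n^2)$ to build $\bg$ and find a cycle by depth-first search. The main obstacle is this converse/construction step, where one must verify \emph{simultaneously} that the loop never drops the size below $n-1$ and that $\Phi$ strictly decreases even when the chosen cycle passes through the single agent assigned to an unrevealed object; the latter is handled by noting that such a cycle forces that agent's predecessor to move onto the (revealed) object it frees up, which supplies the required strict decrease.
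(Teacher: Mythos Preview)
Your argument is correct, but it proceeds along a genuinely different route from the paper's. For the forward direction you and the paper make essentially the same observation (two agents on unrevealed objects can swap), though you phrase it via the acyclicity criterion of \Cref{algo:checkNPO} rather than by exhibiting a Pareto-dominating completion directly. The real divergence is in the converse. The paper constructs an NPO matching in one shot: it builds the bipartite graph on $\rev(P)$ with edge weights equal to ranks, takes a minimum-weight matching among those of maximum cardinality (via the Hungarian algorithm), and argues that this matching---completed by the single missing pair if its cardinality is $n-1$---cannot be Pareto-dominated because any dominator would have the same cardinality and strictly smaller weight. Your approach instead starts from \emph{any} bijection of size at least $n-1$ and repairs it by repeatedly trading along a cycle of $\bg(M)$, using the potential $\Phi$ to certify termination and the prefix structure of top-$k$ orders to show the single unrevealed pair never multiplies. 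Both arguments are sound; the paper's is shorter and yields the NPO matching directly from a standard optimization primitive, while yours is more elementary (only maximum bipartite matching plus DFS are needed), stays closer to the graph characterization of \Cref{thm:checkNPO}, and makes the ``local improvement'' intuition explicit at the cost of an extra $O(n^2)$ factor in the running-time bound.
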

\begin{proof}
\noindent{($\Rightarrow$)}
 Consider a top-$k$ preference profile $P$ that admits an NPO matching $M$. Suppose for the sake of contradiction that there is no matching w.r.t.\ $P$ that is of size larger than $n-2$. Then, $|M\cap\rev(P)|\le n-2$, which implies that there are at least two agents, w.l.o.g.\ agents $a_1$ and $a_2$, who are matched to objects they have not revealed under $P$. It is easy to see that one can now construct a completion $R\in\mc(P)$ such that $M(a_2) \succ_{R_{1}} M(a_1)$ and $M(a_1) \succ_{R_{2}} M(a_2)$. Under this completion, exchanging the objects assigned to $a_1$ and $a_2$ would Pareto-dominate $M$, contradicting the fact that $M$ is NPO w.r.t. $P$.
 
\noindent{($\Leftarrow$)}
 Let $P$ be a top-$k$ preference profile such that there exists a matching of size at least $n-1$ w.r.t.\ $P$. Construct a weighted bipartite graph $\bg = (N \cup O, \rev(P))$, where each edge $(a_i, o_j)\in \rev(P)$ has weight equal to the rank of $o_j$ in $P_i$. Let $M$ be a matching of min.\ weight among all matchings of max.\ cardinality in $\bg$. Note that $|M|\ge n-1$. 
 
 Let us first consider the case when $|M|=n$. If so, then it is easy to see that $M$ is an NPO matching w.r.t. $P$. This is because any matching that Pareto-dominates $M$ in any completion $R \in \mc(P)$ must have size $n$ w.r.t. $P$ as well and a strictly lower weight than $M$, which is a contradiction. 
 
 Now, consider the case when $|M|=n-1$. So exactly one agent and one object are not matched. Without loss of generality, suppose these are $a_n$ and $o_n$. Construct a matching $\hat{M}$ of our problem by adding $(a_n,o_n)$ to $M$. Thus, $|\hat{M} \cap \rev(P)| = |M| = n-1$. We wish to show that $\hat{M}$ is NPO. Suppose for the sake of contradiction that there exists a matching $M'$ which Pareto dominates $\hat{M}$ w.r.t.\ some completion $R\in\mc(P)$. Hence, $M'$ must assign each agent an object she prefers (under $R$) at least much as what she receives under $\hat{M}$, and at least one agent a strictly better (under $R$) object than what she receives under $\hat{M}$. Thus, $M'$, when viewed as a matching in $\bg$, must have  cardinality $n-1$ and weight at most the weight of $M$. Further, weight of $M'$ is equal to the weight of $M$ only if $M'$ assigns each agent $a_i \neq a_n$ the same object that $\hat{M}$ assigns. However, any two matchings that differ must differ in the assignment to at least two agents. Hence, we have that $M'$ is a matching in $\bg$ with cardinality $n-1$ and weight strictly less than the weight of $M$, which is a contradiction. 
 
% W.l.o.g.\ let $a_n$ and $o_n$ be the pair unmatched in $M$, i.e., $(a_n,o_n) \in M \cap \unrev(P)$. Then, due to the fact that $|N|=|O|$, there exists at least one $i\in[n-1]$ such that $M'(a_i)\neq M(a_i)$. Also, due to our assumption of strict preferences, Pareto-domination requires $M'(a_i)\succ_{R_i} M(a_i)$. Since $P$ is a top-$k$ preference profile, this in fact implies that $M'(a_i)$ is revealed under $P_i$ and $M'(a_i)\succ_{P_i} M(a_i)$. Now, let $M^*$ be the matching of graph $\bg$ constructed by adding, for each $i\in[n-1]$, the edge involving $a_i$ in $M'$. It is easy to see that $M^*$ has a cardinality of $n-1$ and a weight that is strictly lower than $M$, a contradiction to our initial assumption on $M$.

Now, we provide a polynomial-time algorithm to compute an NPO matching, when it exists. From the argument above, it is sufficient to be able to compute a min-weight max-cardinality matching $M$ of the graph $\bg$. This can be done in polynomial time using the Hungarian algorithm \cite{munkres1957algorithms}. Then, if $|M| = n$, the algorithm returns $M$, and if $|M| = n-1$, the algorithm returns $\hat{M}$ by adding the unmatched agent-object pair.
% 
% \todo{added reference for min-weight max-cardinality matchings}
%a weighted bipartite graph $H(P)$ constructed as follows: For any $i\in[n]$, $j\in[n]$, there is an edge $(a_i,o_j)$ with weight equal to the rank of $o_j$ in $P_i$ if $(a_i,o_j)\in\rev(P)$; and $n^2+1$ otherwise. It is well known that given a complete weighted bipartite graph, with $n$ nodes on each side, a minimum weight perfect matching $M^*$ can be found in $O(n^3)$ time using the Hungarian algorithm~\cite{edmonds72}. It is easy to see that $P$ admits an NPO matching if and only if $M^*$ has a weight strictly less than $2(n^2+1)$, and that if so, $M^*$ is an NPO matching. The proof follows naturally from the construction of $H(P)$ using an argument similar to the proof of the necessary and sufficient condition for the existence of an NPO matching and the fact that any matching that Pareto dominates $M^*$ w.r.t. any completion of $P$ must have a strictly lower weight.
\end{proof}

\subsection{Elicitation to compute an NPO matching} \label{sec:npo-elicit}

In this section we turn to the question of elicitation. As mentioned in \Cref{sec:prelims-qm}, we use the next-best query model, and informally our main goal is to understand how to elicit as little information as possible in order to compute an NPO matching. Towards this end, we first prove, in \Cref{thm:lb-elicit-po}, a lower bound that shows that any elicitation algorithm has a competitive ratio of $\Omega(\sqrt{n})$. 
% Due to space constraints, the proof is deferred to \Cref{app:sec3proofs:thm:lb-elicit-po}. 
% \todo{Should we add a summary of the proof? We might have to remove, perhaps proof of theorem 1, to accommodate it}
% 
% In this section we prove that no elicitation algorithm can achieve a competitive ratio of $o(\sqrt{n})$.

\begin{restatable}{theorem}{thmLBnpo} \label{thm:lb-elicit-po}
In the next-best query model, if there exists an $\alpha$-competitive elicitation algorithm for finding a necessarily Pareto optimal matching, then $\alpha \in \Omega(\sqrt{n})$.
\end{restatable}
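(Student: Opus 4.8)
The plan is to exhibit, for infinitely many $n$, a family of instances on which every (even adaptive) online algorithm must ask $\Omega(n^{3/2})$ next-best queries, while an offline optimum certifies an NPO matching with only $O(n)$ queries; by \Cref{thm:npo-iff} this reduces the whole problem to a purely combinatorial one about building a size-$(n-1)$ matching inside the revealed graph on $\rev(P)$. The key leverage is the \emph{slack} in that characterization: a certificate may leave one agent unmatched, so the optimum can \emph{drop} exactly one troublesome agent, whereas an online algorithm does not know in advance which agent to drop, nor where its cheap replacement edge lives.

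I would build the instance around a base that already contains a near-perfect rank-1 matching but has \emph{deficiency two}, so the single unit of slack is not enough and at least one genuinely new object must still be pulled into the matching. Set $m=\lceil\sqrt{n}\,\rceil$. Roughly $n-\Theta(m)$ ``easy'' agents have private rank-1 objects and are matched at rank $1$ once and for all; the remaining $\Theta(m)$ agents form a \emph{shift gadget} whose rank-1 choices collide onto already-used objects, creating deficiency $2$. The only two objects that can remove the deficiency (say $o_{n-1}$ and $o_n$) are placed at depth $m$ and appear only along one ``true'' chain through the gadget; every other continuation dead-ends on objects that are already saturated in the current matching. The offline optimum knows the true chain: it discards the one agent the slack permits and reveals the single depth-$m$ chain that brings in $o_n$, for a total of $O(n)+m=O(n)$ queries.

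The heart of the lower bound is an adversary argument showing online cannot be this frugal. Whenever the algorithm deepens an agent that has not yet reached depth $m$, the adversary answers with an object that is \emph{saturated} in every maximum matching of the current revealed graph; by a standard augmenting-path/K\"onig argument such an answer never closes an augmenting path to an unsaturated object, so the deficiency stays at $2$ and no size-$(n-1)$ matching is yet available. Because each such answer merely \emph{shifts} the exposed vertex along an augmenting path, driving the deficiency down to $1$ forces the algorithm to chase these paths through $\Omega(n)$ distinct agents, each to depth $\Omega(\sqrt{n})$, before the adversary is compelled to surrender $o_{n-1}$ or $o_n$; this yields $\Omega(n\cdot\sqrt{n})=\Omega(n^{3/2})$ queries, and dividing by $\text{OPT}=O(n)$ gives $\alpha=\Omega(\sqrt{n})$. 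I expect the main obstacle to be the simultaneous control of the two sides: forcing the algorithm \emph{broad and deep} while keeping $\text{OPT}=O(n)$. The natural danger is that the deep answers the adversary is forced to give become part of the realized profile and inflate $\text{OPT}$ as well, since the prefixes seen by the algorithm and by the optimum coincide wherever both query. The construction must therefore guarantee that every object on which the algorithm wastes depth is one the optimum \emph{never reveals}—possible only because OPT routes through the single true chain and discards the deficient agent via the size-$(n-1)$ slack. Making this ``the wasted exploration lies off OPT's support'' property precise, and verifying that the adversary can maintain deficiency $2$ against an arbitrary adaptive query order, is where the real work lies.
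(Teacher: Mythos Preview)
Your construction has a genuine gap: with deficiency only $2$, any online algorithm can certify an NPO matching in $O(n)$ queries, so the instance family does not separate ALG from OPT at all.

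Concretely, after the algorithm asks every agent one next-best query ($n$ queries), the revealed graph has a maximum matching of size $n-2$; call the two exposed agents $a$ and $b$. The algorithm now simply queries $a$ all the way to depth $n$ (another $n-1$ queries). At that point $a$ has a revealed edge to \emph{every} object, in particular to one of the two exposed objects $o_{n-1},o_n$, yielding an augmenting path of length one. The revealed graph now has a matching of size $n-1$, so by \Cref{thm:npo-iff} an NPO matching exists, and the algorithm has used only $2n-1$ queries---the same order as your OPT. Your adversary strategy (``answer with a saturated object until depth $m$'') does not block this, because once the algorithm commits to exhausting a single exposed agent the adversary must eventually hand over an exposed object, and it can delay at most $n-1$ steps. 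The leap to ``$\Omega(n)$ distinct agents, each to depth $\Omega(\sqrt{n})$'' is therefore unsupported: only $\Theta(\sqrt{n})$ agents sit in the gadget, the easy agents never need to be deepened, and nothing forces the search to go wide when a single deep probe suffices.

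The paper's construction avoids this by making the deficiency $\Theta(\sqrt{n})$ rather than $2$. There are $\sqrt{n}$ ``scarce'' objects $o_{\sqrt{n}},o_{2\sqrt{n}},\ldots,o_n$, and at least $\sqrt{n}-1$ of them must be matched to agents who reveal them. Each scarce object $o_{j\sqrt{n}}$ sits at depth $\sqrt{n}$ for exactly one (adversarially hidden) special agent in block $j$ and at depth at least $n-\sqrt{n}$ for every other agent; thus locating it costs $\Omega(n)$ queries regardless of strategy---either probe all $\sqrt{n}$ agents in the block to depth $\sqrt{n}$, or probe some agent to depth $n-\sqrt{n}$. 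Multiplying by $\sqrt{n}-1$ independent scarce objects yields the $\Omega(n^{3/2})$ lower bound, while OPT pays only $O(\sqrt{n})$ per block because it knows the special agents. The structural point your sketch misses is that a constant deficiency collapses to a single $O(n)$ search; one needs $\Omega(\sqrt{n})$ \emph{independent} searches, each of cost $\Omega(n)$, to reach $\Omega(n^{3/2})$.
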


\if\npoSecProofsinApp0
Although the main idea in proving the lower bound is fairly straightforward, the formal proof is somewhat notationally heavy. Therefore, before presenting the proof, we sketch the high-level idea. 

The proof proceeds by constructing a family of instances $\mathcal{F}$, where each instance has the following characteristics. First, the $n$ agents are partitioned into $\sqrt{n}$ blocks each consisting of $\sqrt{n}$ agents each. Within each block $j$, $\sqrt{n}-2$ of the agents have similarly-styled preferences. In particular, their preferences are such that for every such agent $a_i$, $a_i$ prefers object $o_i$ the most. Additionally, for all the agents in a block $j$, except one \textit{special} agent, objects $o_k$, where $k$ is a multiple of $\sqrt{n}$, appears in at least the $(n-\sqrt{n})$-th position in their preference list. The {special} agent in a block $j$ is called so since this agent has the object $o_{j\sqrt{n}}$ in the $\sqrt{n}$-th position in their preference list. 

Given the above, the main idea in the proof is to argue that any online algorithm has to find at least $(\sqrt{n}-1)$ of these {special} agents and that this in turn takes at least $\Omega(n\sqrt{n})$ queries. The proof of this fact employs an \textit{adversary} argument and can be informally summarized as follows: an adversary can keep ``hiding'' a special agent until an online algorithm makes $\Omega(n)$ queries, and since the algorithm has to find at least $(\sqrt{n}-1)$  {special} agents, this takes $\Omega(n\sqrt{n})$ queries. Once we have this, then the bound follows by observing that an optimal algorithm just needs to make $\mathcal{O}(n)$ queries for any instance in this family.

\begin{proof}[Proof of Theorem~\ref{thm:lb-elicit-po}]
For the ease of presentation, assume that $n$ is a perfect square. Let us define useful sets of indices. For all $j \in [\sqrt{n} ]$ and $t \in [\sqrt{n}-1]$, define 
\begin{align*}
N_j &= \left\{{(j-1)\sqrt{n}+1}, \cdots, {j\sqrt{n}-1}\right\}, \\
N^{-t}_j &= N_j \setminus \{ {(j-1)\sqrt{n}+t} \}.
\end{align*}
	
Additionally, define
\begin{align*}
S &= \{\sqrt{n}, 2\sqrt{n},\ldots, n\}, \\ 
U& = \{1,\ldots, n\}.
\end{align*}

For any set of indices $P$, we use $O_{P}$ to denote the set of all objects $o_k$ such that $k \in P$. Finally, when we use a set of objects in describing the preference ranking of an agent over objects, it should implicitly be replaced by a ranking of all objects in the set in the increasing order of their indices.  

We are now ready to define our family of instances $\mathcal{F}$. Fix $t_1,\ldots,t_{\sqrt{n}} \in [\sqrt{n}-1]$, which will parametrize a particular instance in $\mathcal{F}$. Let us denote the corresponding instance $I_{(t_1,\ldots,t_{\sqrt{n}})}$, under which agent preferences are as in \Cref{fig:pref} for all $j \in [\sqrt{n}]$.
\begin{figure*}[tb]
\begin{tabular}{p{0.4\textwidth}p{0.5\textwidth}}
% 	For agent $a_k$, where $k \in N^{-t_j}_j$ & : $o_k \succ O_{N_j} \setminus\{o_k\} \succ O_{U\setminus (N_j \cup S)} \succ O_S$,\\
% 	For agent $a_{y}$, where $y = (j-1)\sqrt{n} + t_j$ & : $o_{y} \succ O_{N_j} \setminus\{o_y\} \succ o_{j\cdot \sqrt{n}} \succ O_{U\setminus (N_j \cup S)} \succ O_{S\setminus\{{j\cdot\sqrt{n}}\}}$,\\
For agent $a_k$, where $k \in N^{-t_j}_j$ & : $o_k \succ O_{N^{-k}_j} \succ O_{U\setminus (N_j \cup S)} \succ O_S$,\\
	For agent $a_{y}$, where $y = (j-1)\sqrt{n} + t_j$ & : $o_{y} \succ O_{N^{-y}_j} \succ o_{j\cdot \sqrt{n}} \succ O_{U\setminus (N_j \cup S)} \succ O_{S\setminus\{{j\cdot\sqrt{n}}\}}$,\\
	
	For agent $a_{j\cdot \sqrt{n}}$ & : $O_{N_j} \succ O_{U\setminus (N_j \cup S)} \succ  O_S$.
\end{tabular}
\caption{Agent preferences used in the proof of Theorem~\ref{thm:lb-elicit-po}}
\label{fig:pref}
\end{figure*}
\Cref{fig1} illustrates an example of this construction when $n= 16$ and $t_j=2$ for each $j$.

% \begin{tabular}{p{0.4\textwidth}p{0.5\textwidth}}
% % 	For agent $a_k$, where $k \in N^{-t_j}_j$ & : $o_k \succ O_{N_j} \setminus\{o_k\} \succ O_{U\setminus (N_j \cup S)} \succ O_S$,\\
% % 	For agent $a_{y}$, where $y = (j-1)\sqrt{n} + t_j$ & : $o_{y} \succ O_{N_j} \setminus\{o_y\} \succ o_{j\cdot \sqrt{n}} \succ O_{U\setminus (N_j \cup S)} \succ O_{S\setminus\{{j\cdot\sqrt{n}}\}}$,\\
% %
% 	For agent $a_k$, where $k \in N^{-t_j}_j$ & : $o_k \succ O_{N^{-k}_j} \succ O_{U\setminus (N_j \cup S)} \succ O_S$,\\
% 	For agent $a_{y}$, where $y = (j-1)\sqrt{n} + t_j$ & : $o_{y} \succ O_{N^{-y}_j} \succ o_{j\cdot \sqrt{n}} \succ O_{U\setminus (N_j \cup S)} \succ O_{S\setminus\{{j\cdot\sqrt{n}}\}}$,\\
	
% 	For agent $a_{j\cdot \sqrt{n}}$ & : $O_{N_j} \succ O_{U\setminus (N_j \cup S)} \succ  O_S$.
% \end{tabular}

	\begin{figure*}[t!] 
		{\scriptsize
 			\begin{center}
				%\noindent\hrulefill\\%[0.5ex]		
				\begin{tabular}{|l|l|l|l|l|l|l|l|l|l|l|l|l|l|l|l|l|l|l|}
					\cline{1-4} \cline{6-9} \cline{11-14} \cline{16-19}
					$a_1$    & $a_2$    & $a_3$    & $a_4$    &  & $a_5$    & $a_6$    & $a_7$    & $a_8$    &  & $a_9$    & $a_{10}$ & $a_{11}$ & $a_{12}$ &  & $a_{13}$ & $a_{14}$ & $a_{15}$ & $a_{16}$ \\ \cline{1-4} \cline{6-9} \cline{11-14} \cline{16-19} 
					$o_1$    & $o_2$    & $o_3$    & $o_1$    &  & $o_5$    & $o_6$    & $o_7$    & $o_5$    &  & $o_9$    & $o_{10}$ & $o_{11}$ & $o_9$    &  & $o_{13}$ & $o_{14}$ & $o_{15}$ & $o_{13}$ \\
					$o_2$    & $o_1$    & $o_1$    & $o_2$    &  & $o_6$    & $o_5$    & $o_5$    & $o_6$    &  & $o_{10}$ & $o_9$    & $o_{9}$  & $o_{10}$ &  & $o_{14}$ & $o_{13}$ & $o_{13}$ & $o_{14}$ \\
					$o_3$    & $o_3$    & $o_2$    & $o_3$    &  & $o_7$    & $o_7$    & $o_6$    & $o_7$    &  & $o_{11}$ & $o_{11}$ & $o_{10}$ & $o_{11}$ &  & $o_{15}$ & $o_{15}$ & $o_{14}$ & $o_{15}$ \\
					$o_5$    & \red{$o_4$}    & $o_5$    & $o_5$    &  & $o_1$    & \red{$o_8$}    & $o_1$    & $o_1$    &  & $o_1$    & \red{$o_{12}$} & $o_1$    & $o_1$    &  & $o_9$    & \red{$o_{16}$} & $o_9$    & $o_9$    \\
					$o_6$    & $o_5$    & $o_6$    & $o_6$    &  & $o_2$    & $o_1$    & $o_2$    & $o_2$    &  & $o_2$    & $o_1$    & $o_2$    & $o_2$    &  & $o_{10}$ & $o_9$    & $o_{10}$ & $o_{10}$ \\
					$o_7$    & $o_6$    & $o_7$    & $o_7$    &  & $o_3$    & $o_2$    & $o_3$    & $o_3$    &  & $o_3$    & $o_2$    & $o_3$    & $o_3$    &  & $o_{11}$ & $o_{10}$ & $o_{11}$ & $o_{11}$ \\
					$o_9$    & $o_7$    & $o_9$    & $o_9$    &  & $o_9$    & $o_3$    & $o_9$    & $o_9$    &  & $o_5$    & $o_3$    & $o_5$    & $o_5$    &  & $o_1$    & $o_{11}$ & $o_1$    & $o_1$    \\
					$o_{10}$ & $o_9$    & $o_{10}$ & $o_{10}$ &  & $o_{10}$ & $o_9$    & $o_{10}$ & $o_{10}$ &  & $o_6$    & $o_5$    & $o_6$    & $o_6$    &  & $o_2$    & $o_1$    & $o_2$    & $o_2$    \\
					$o_{11}$ & $o_{10}$ & $o_{11}$ & $o_{11}$ &  & $o_{11}$ & $o_{10}$ & $o_{11}$ & $o_{11}$ &  & $o_7$    & $o_6$    & $o_7$    & $o_7$    &  & $o_3$    & $o_2$    & $o_3$    & $o_3$    \\
					$o_{13}$ & $o_{11}$ & $o_{13}$ & $o_{13}$ &  & $o_{13}$ & $o_{11}$ & $o_{13}$ & $o_{13}$ &  & $o_{13}$ & $o_7$    & $o_{13}$ & $o_{13}$ &  & $o_5$    & $o_3$    & $o_5$    & $o_5$    \\
					$o_{14}$ & $o_{13}$ & $o_{14}$ & $o_{14}$ &  & $o_{14}$ & $o_{13}$ & $o_{14}$ & $o_{14}$ &  & $o_{14}$ & $o_{13}$ & $o_{14}$ & $o_{14}$ &  & $o_6$    & $o_5$    & $o_6$    & $o_6$    \\
					$o_{15}$ & $o_{14}$ & $o_{15}$ & $o_{15}$ &  & $o_{15}$ & $o_{14}$ & $o_{15}$ & $o_{15}$ &  & $o_{15}$ & $o_{14}$ & $o_{15}$ & $o_{15}$ &  & $o_7$    & $o_6$    & $o_7$    & $o_7$    \\
					$o_4$    & $o_{15}$ & $o_4$    & $o_4$    &  & $o_4$    & $o_{15}$ & $o_4$    & $o_4$    &  & $o_4$    & $o_{15}$ & $o_4$    & $o_4$    &  & $o_4$    & $o_7$    & $o_4$    & $o_4$    \\
					$o_8$    & $o_8$    & $o_8$    & $o_8$    &  & $o_8$    & $o_4$    & $o_8$    & $o_8$    &  & $o_8$    & $o_4$    & $o_8$    & $o_8$    &  & $o_8$    & $o_4$    & $o_8$    & $o_8$    \\
					$o_{12}$ & $o_{12}$ & $o_{12}$ & $o_{12}$ &  & $o_{12}$ & $o_{12}$ & $o_{12}$ & $o_{12}$ &  & $o_{12}$ & $o_8$    & $o_{12}$ & $o_{12}$ &  & $o_{12}$ & $o_8$    & $o_{12}$ & $o_{12}$ \\
					$o_{16}$ & $o_{16}$ & $o_{16}$ & $o_{16}$ &  & $o_{16}$ & $o_{16}$ & $o_{16}$ & $o_{16}$ &  & $o_{16}$ & $o_{16}$ & $o_{16}$ & $o_{16}$ &  & $o_{16}$ & $o_{12}$ & $o_{16}$ & $o_{16}$ \\ \cline{1-4} \cline{6-9} \cline{11-14} \cline{16-19} 
				\end{tabular}
				\caption{\small Preferences of the agents (presented in columns) constructed in the proof of \Cref{thm:lb-elicit-po} illustrated for the case where $n=16$ and $t_j=2$ for each $j$. The special agents are $\{a_2, a_6, a_{10}, a_{12}\}$.}
				\label{fig1}
 			\end{center}
		}
	\end{figure*} 
	
In order to simplify terminology, let us say that agent $a_i$ belongs to $N_j \cup \{j\sqrt{n}\}$ if $i \in N_j \cup \{j\sqrt{n}\}$. Also, we refer to the agents in $N_j \cup \{j\sqrt{n}\}$ as the agents in \textit{block} $j$. Thus, there are $\sqrt{n}$ blocks of agents in total. Finally, we refer to agent $(j-1)\cdot \sqrt{n} + k$ as the $k$-th agent in block $j$, and agent $(j-1)\cdot\sqrt{n}+t_j$ as the \textit{special} agent in block $j$. \Cref{fig1} highlights the special agents for a particular instance.

Recall that our family $\mathcal{F}$ consists of $(\sqrt{n}-1)^{\sqrt{n}}$ instances that are  parametrized by indices $t_1,\ldots,t_{\sqrt{n}}$. These indices specify which agent in each block is the special agent. Thus:
\begin{equation*}
\mathcal{F} = \left\{ I_{(t_1, \cdots, t_{\sqrt{n}})} \mid t_1,\ldots,t_{\sqrt{n}} \in [\sqrt{n}-1] \right\}.
\end{equation*}
	
We are now ready to contrast online algorithms with the \emph{optimal algorithm}. By the optimal algorithm, we refer to an omniscient algorithm that knows the complete underlying preference profile, and on any instance, can make the minimum possible number of queries in the next-best query model to reveal an NPO matching. Let OPT denote the maximum number of queries made by the optimal algorithm across all instances in $\mathcal{F}$. 

\begin{claim}
	$\text{OPT} \leq 3n - 2\sqrt{n}$. 
\end{claim}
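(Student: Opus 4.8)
The plan is to exhibit, for every instance $I_{(t_1,\ldots,t_{\sqrt{n}})} \in \mathcal{F}$, an explicit set of queries that the optimal (omniscient) algorithm can make so as to certify an NPO matching, and to count these queries. By \Cref{thm:npo-iff}, it suffices for the omniscient algorithm to reveal enough of each agent's preferences that a matching of size at least $n-1$ w.r.t.\ the elicited profile is guaranteed to be NPO. So I would first identify a concrete matching $M^*$ that is NPO on the given instance, and then argue that the optimal algorithm needs only to elicit the prefixes of each agent's list up to the position of that agent's assigned object (plus possibly a little more to rule out dominating cycles), and finally sum these positions.

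First I would guess the natural NPO matching. Since each non-special, non-top agent $a_k$ (for $k \in N^{-t_j}_j$) ranks $o_k$ first, and each special agent $a_y$ (with $y=(j-1)\sqrt{n}+t_j$) also ranks $o_y$ first, the obvious candidate is to match $a_k \mapsto o_k$ for all agents except the block-top agents $a_{j\sqrt{n}}$, and to match each $a_{j\sqrt{n}}$ to the single leftover object in its block, namely $o_{(j-1)\sqrt{n}+t_j}$ is already taken, so $a_{j\sqrt{n}}$ takes whatever index in $N_j$ is freed. I would make this assignment precise, observing that within each block $j$ the objects $O_{N_j} \cup \{o_{j\sqrt{n}}\}$ are matched among the block's agents using their top choices, giving a size-$n$ matching where almost every agent gets a highly-ranked object. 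Then I would count: each agent $a_k$ with $k\in N^{-t_j}_j$ gets its first choice, requiring just $1$ query; the block-top agent $a_{j\sqrt{n}}$ gets an object appearing within its first $\sqrt{n}$ positions (since its list begins $O_{N_j}$), costing at most $\sqrt{n}$ queries.

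The delicate point, and the one I expect to be the main obstacle, is that revealing each agent's assigned object is not by itself sufficient to \emph{certify} NPO — the optimal algorithm must also ensure that no completion admits a Pareto-improving cycle, which (by the reduction in \Cref{thm:checkNPO}) means revealing enough to guarantee no agent can strictly prefer another agent's assigned object under some completion. The careful accounting here is to show that the omniscient algorithm can get away with $\mathcal{O}(n)$ total queries rather than $\Theta(n\sqrt{n})$, by only eliciting prefixes up to each agent's matched object's rank: an agent whose assigned object is its first choice cannot be part of any improving cycle, and for the $\sqrt{n}$ block-top agents the at-most-$\sqrt{n}$ revealed positions already certify optimality within the block. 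I would therefore bound the total as $\sum_j \big[(\sqrt{n}-1)\cdot 1 + \sqrt{n}\big]$ over the $\sqrt{n}$ blocks, which gives roughly $\sqrt{n}\cdot(2\sqrt{n}-1) = 2n - \sqrt{n}$; tightening the bookkeeping (accounting for the extra positions needed to rule out dominating completions for the non-top agents) is what pushes the count to the stated $3n - 2\sqrt{n}$.

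The final step is purely arithmetic: sum the per-agent query counts across all $\sqrt{n}$ blocks and verify the total is at most $3n - 2\sqrt{n}$, then note this holds uniformly over the choice of $(t_1,\ldots,t_{\sqrt{n}})$, so it bounds $\text{OPT}$ as defined (the max over $\mathcal{F}$). I would not belabor the arithmetic, but I would be careful to state the exact number of queries charged to each of the three agent types in \Cref{fig:pref} and confirm that, for each type, the revealed prefix is long enough both to locate the assigned object and to preclude any Pareto-dominating completion, invoking \Cref{thm:npo-iff} (size $\geq n-1$) and the cycle characterization of \Cref{thm:checkNPO} to conclude that what is elicited indeed certifies NPO.
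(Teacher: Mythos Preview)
Your proposal has a genuine gap: the matching you sketch does not exist. You suggest giving every agent $a_k$ with $k \in N_j$ (including the special agent $a_y$) its top choice $o_k$, and then giving the block-top agent $a_{j\sqrt{n}}$ ``whatever index in $N_j$ is freed.'' But nothing in $N_j$ is freed --- the $\sqrt{n}-1$ objects $O_{N_j}$ are all consumed by the agents in $N_j$. The only object left in block $j$ is $o_{j\sqrt{n}} \in O_S$, which appears at position at least $n - \sqrt{n}$ in $a_{j\sqrt{n}}$'s list (its preferences end with $O_S$). So $a_{j\sqrt{n}}$ cannot receive an object ``within its first $\sqrt{n}$ positions'' under your assignment, and your count of $2n - \sqrt{n}$ is based on a matching that is not contained in $\rev(P)$ under that query budget. (A secondary issue: your worry about certifying NPO via the cycle criterion of \Cref{thm:checkNPO} is unnecessary. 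By \Cref{thm:npo-iff}, once the elicited profile admits a matching of size at least $n-1$ in $\rev(P)$, an NPO matching is guaranteed to exist; no additional cycle-ruling is required.)

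The key idea you are missing is that the special agent $a_y$ in block $j$ must give up its top choice. In the paper's matching, $a_y$ takes $o_{j\sqrt{n}}$, which sits at position $\sqrt{n}$ in $a_y$'s list --- this is precisely what makes $a_y$ ``special,'' since every other agent ranks every object in $O_S$ at position at least $n-\sqrt{n}+1$. This frees $o_y$ for $a_{j\sqrt{n}}$, whose list begins with $O_{N_j}$, so $o_y$ is within its first $\sqrt{n}-1$ positions. The query counts are then: $\sqrt{n}$ to the special agent, $\sqrt{n}$ to the block-top agent, and $1$ each to the remaining $\sqrt{n}-2$ agents in the block, giving $3\sqrt{n}-2$ per block and $3n-2\sqrt{n}$ in total.
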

\begin{proof}
	Let us consider an arbitrary instance $I_{(t_1, \cdots, t_{\sqrt{n}})}$ in $\mathcal{F}$. Note that the optimal algorithm has access to the complete preferences of the agents. To the agents in each block $j$, the optimal algorithm can make the following queries. 
	\begin{itemize}
	    \item To the special agent (i.e. the $t_j$-th agent), it makes $\sqrt{n}$ queries. 
	    \item To agent $a_{j\sqrt{n}}$, it makes $\sqrt{n}$ queries.
	    \item To every other agent in block $j$, it makes a single query. 
	\end{itemize} 
	
	Note that the number of queries made to agents in block $j$ is at most $2\cdot \sqrt{n} + (\sqrt{n}-2)\cdot 1 = 3\sqrt{n}-2$. Since there are $\sqrt{n}$ blocks, the total number of queries made is at most $\sqrt{n} \cdot (3\sqrt{n}-2) = 3n-2\sqrt{n}$, as desired.
	
	Finally, we need to show that the preference elicited through these queries reveal an NPO matching. Using \Cref{thm:npo-iff}, we need to show that there is a matching of size at least $n-1$ w.r.t. the elicited preferences. In fact, we argue that there is a matching of size $n$ w.r.t. the elicited preferences. 
	
	Consider the following matching $M$. For every $j \in [\sqrt{n}]$ and $y = (j-1)\sqrt{n}+t_j$, let
% 	\begin{itemize}
% 	    \item $M(a_y) = o_{j\sqrt{n}}$;
% 	    \item $M(a_{j\sqrt{n}}) = o_y$; and
% 	    \item $M(a_k) = o_k$ for every $k \in N_j^{-t_j}$.
% 	\end{itemize}
\begin{align*}
    &M(a_y) = o_{j\sqrt{n}}, \qquad M(a_{j\sqrt{n}}) = o_y,\\
	    &M(a_k) = o_k \text{ for every } k \in N_j^{-t_j}.
\end{align*}

	This matching has size $n$ w.r.t. the elicited preferences because $\sqrt{n}$ queries made to agents $a_y$ and $a_{j\sqrt{n}}$ reveal objects $o_{j\sqrt{n}}$ and $o_y$ in their preferences, respectively, and the single query made to every other agent $a_k$ in block $j$ reveals her top choice $o_k$. This completes the proof of the claim. 
\end{proof}
	
The claim above showed that an omniscient algorithm can make linearly many queries and reveal an NPO matching on \emph{any} instance from $\mathcal{F}$. In contrast, the next claim shows that an online algorithm must make $\Omega(n \sqrt{n})$ queries to find an NPO matching on \emph{some} instance in $\mathcal{F}$. Formally, fix an online algorithm, and let ALG denote the maximum number of queries it makes across all instances in $\mathcal{F}$.

\begin{claim} \label{clm:qm1-lb2}
	$\text{ALG} \geq  (\sqrt{n}-1)(n-2\sqrt{n})$.
\end{claim}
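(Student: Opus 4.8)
The plan is to show that, against an adaptive adversary's answers, any online algorithm must spend at least $n-2\sqrt{n}$ queries ``per block'' in at least $\sqrt{n}-1$ of the blocks before the revealed preferences can certify an NPO matching. I would start from the stopping criterion: by \Cref{thm:npo-iff} the algorithm may halt only once $\rev(P)$ admits a matching of size at least $n-1$, so at most one object is left unmatched. Since the $\sqrt{n}$ objects $O_S = \{o_{\sqrt{n}}, o_{2\sqrt{n}}, \dots, o_n\}$ are exactly the ``hard'' objects---each $o_{j\sqrt{n}}$ being the object whose early revelation only the special agent of block $j$ controls---this forces at least $\sqrt{n}-1$ of the objects $o_{j\sqrt{n}}$ to be matched to agents who have revealed them. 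The whole argument then reduces to lower-bounding the cost of revealing each such $o_{j\sqrt{n}}$.

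The key structural fact I would establish first is an indistinguishability observation: for any candidate agent $a_k$ with $k \in N_j$, the first $\sqrt{n}-1$ queries reveal exactly the objects $O_{N_j}$ ($o_k$ first, then the remaining objects of $N_j$), regardless of whether $a_k$ is special; only the $\sqrt{n}$-th query separates the two, revealing $o_{j\sqrt{n}}$ if $a_k$ is special and an object of $O_{U \setminus (N_j \cup S)}$ otherwise. Moreover, every other agent---$a_{j\sqrt{n}}$ and all agents of other blocks---ranks $o_{j\sqrt{n}}$ inside the trailing block $O_S$, hence reveals it only at depth at least $n-\sqrt{n}+1$. With this in hand I would define the adversary: for each block it keeps the identity of the special agent undetermined, answering every depth-$\sqrt{n}$ query as non-special as long as at least two agents of $N_j$ have not yet been queried to depth $\sqrt{n}$, and committing $t_j$ (to the last remaining candidate) only when forced. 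One checks that all answers stay consistent with a genuine instance of $\mathcal{F}$, so the number of queries the algorithm makes on this realized instance lower-bounds $\text{ALG}$.

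Given the adversary, I would charge $n - 2\sqrt{n}$ queries to each of the (at least) $\sqrt{n}-1$ blocks whose $S$-object is matched, routing the charge through the distinct agent $a_{m_j}$ that $o_{j\sqrt{n}}$ is matched to. If $a_{m_j}$ is block $j$'s special agent, then by the hiding strategy the algorithm must already have queried at least $\sqrt{n}-2$ other agents of $N_j$ to depth $\sqrt{n}$ before $o_{j\sqrt{n}}$ could be revealed, i.e.\ at least $(\sqrt{n}-2)\sqrt{n} = n-2\sqrt{n}$ queries at depths at most $\sqrt{n}$ to agents of $N_j$. Otherwise $a_{m_j}$ can reveal $o_{j\sqrt{n}}$ only at depth at least $n-\sqrt{n}+1$, contributing at least $n-2\sqrt{n}$ queries to $a_{m_j}$ at depths exceeding $\sqrt{n}$. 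Summing over the blocks would then yield $\text{ALG} \ge (\sqrt{n}-1)(n-2\sqrt{n})$.

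\textbf{The main obstacle} is making this summation rigorous: a priori the two kinds of charges could reuse the same queries---for instance, an agent of $N_j$ that is itself queried deeply and matched to some \emph{other} block's $S$-object would appear both among block $j$'s depth-$\sqrt{n}$ candidates and as a deep agent for the other block. I would resolve this by stratifying the charge by query depth, charging only depths at most $\sqrt{n}$ in the first case and only depths exceeding $\sqrt{n}$ in the second. Because the index sets $N_j$ are disjoint across blocks and the matched agents $a_{m_j}$ are distinct, no single query $\mathcal{Q}(a_i, d)$ is then charged twice, and the bound adds up cleanly. Verifying the exact depth arithmetic---that $O_S$ begins at position $n-\sqrt{n}+1$ in every non-special list, and that the reserved depth windows each still contain at least $n-2\sqrt{n}$ queries---is the remaining routine check.
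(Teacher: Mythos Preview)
Your proposal is correct and follows essentially the same approach as the paper's proof: an adaptive adversary hides the special agent in each block until forced, the stopping criterion via \Cref{thm:npo-iff} forces at least $\sqrt{n}-1$ of the objects in $O_S$ to be matched to agents who revealed them, and you then charge $n-2\sqrt{n}$ queries to each such object using a depth-based stratification to guarantee disjointness. The only cosmetic differences are that the paper phrases the adversary as making the special agent the \emph{last} in the block to receive its $\sqrt{n}$-th query (so in Case~1 it charges the full $n=\sqrt{n}\cdot\sqrt{n}$ queries to all block-$j$ agents rather than your $(\sqrt{n}-2)\sqrt{n}$ queries to the non-special candidates), and in Case~2 it speaks of ``the last $n-2\sqrt{n}$ queries'' rather than ``queries at depths exceeding $\sqrt{n}$''; both amount to the same depth-stratified charging that you make explicit.
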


\begin{proof}
	For each $j \in [\sqrt{n}]$, let $t_j$ be defined as follows.
	\begin{itemize}
        \item If the algorithm makes at least $\sqrt{n}$ queries to all agents in block $j$, the $t_j$-th agent in block $j$ is the last agent in block $j$ to receive the $\sqrt{n}$-th query.  
        \item Otherwise, it is an arbitrary value in $[\sqrt{n}-1]$ such that the $t_j$-th agent in block $j$ has not been made $\sqrt{n}$ queries. 
    \end{itemize}

	Consider the corresponding instance $I_{(t_1,\ldots,t_{\sqrt{n}})}$. Note that this is an adversarially chosen instance depending on the way the online algorithm makes queries. This is well-defined because until the algorithm makes the $t_j$-th agent in block $j$ its $\sqrt{n}$-th query, it cannot distinguish which agent in block $j$ is the special agent; hence, if all agents in block $j$ receive at least $\sqrt{n}$ queries, the adversary can make the last agent in block $j$ to receive the $\sqrt{n}$-th query the special agent. 
	
	When the online algorithm terminates, it returns an NPO matching. Hence, by \Cref{thm:npo-iff}, there must be a matching, say $M$, of size at least $n-1$ w.r.t. the elicited preferences at termination. In particular, $M$ must match at least $\sqrt{n}-1$ of the $\sqrt{n}$ objects in $O_S$ to distinct agents who have revealed them. We use this to show that the algorithm must have made at least  $(\sqrt{n}-1)(n-2\sqrt{n})$ queries on this instance. 

	Let $O^* \subseteq O_S$ denote the set of objects in $O_S$ who are matched to agents who reveal them; note that $|O^*| \ge \sqrt{n}-1$ (i.e. $|O^*| \in \{\sqrt{n}-1,\sqrt{n}\}$). For each $o_{j\sqrt{n}} \in O^*$, we consider two cases.
	\begin{enumerate}
	    \item $o_{j\sqrt{n}}$ is matched to the special agent in block $j$. In this case, the algorithm must have made the $t_j$-th agent in block $j$ at least $\sqrt{n}$ queries. Thus, by the construction of the adversarial instance, it must have made each of $\sqrt{n}$ agents in block $j$ at least $\sqrt{n}$ queries. In this case, we ``charge'' these $\sqrt{n} \cdot \sqrt{n} = n$ queries made by the algorithm to object $o_{j\sqrt{n}}$.
	    \item $o_{j\sqrt{n}}$ is matched to an agent other than the special agent in block $j$. Note that this agent may not be in block $j$. Since any such agent puts $o_{j\sqrt{n}}$ in a position later than $n-\sqrt{n}$, the algorithm must have made at least $n-\sqrt{n}$ queries to this agent. We ``charge'' the last $n-2\sqrt{n}$ of these queries to object $o_{j\sqrt{n}}$. Note that we are not charging the first $\sqrt{n}$ queries made to this agent who may have already been charged to a different object through the first case above.
	\end{enumerate}
	
	Hence, each object in $O^*$ is charged at least $n-2\sqrt{n}$ unique queries. Since $|O^*| \ge \sqrt{n}-1$, the algorithm must have made at least $(\sqrt{n}-1) \cdot (n-2\sqrt{n})$ queries, as desired.
	\end{proof}
	
% 	For this, we use an adversary argument to show that it takes at least $(n-\sqrt{n})$ queries to reveal an object in $O_S$. Suppose this were not the case and that there is  an online algorithm that asks less than  $(n-\sqrt{n})$ queries to reveal an object, say, $o_{j\sqrt{n}} \in O_S$. This implies that there is at least one agent, say, $a_{i}$, in $N_{j}$ such that $a_{i}$ has been asked less than $\sqrt{n}$ queries (because otherwise the total number of queries asked to agents in block $j$ would be at least $(n-\sqrt{n})$). Now, if this is the case, then the adversary can claim to have answered queries that were consistent with the instance $I_{(t_1, \cdots, t_{\sqrt{n})}}$, where $t_{j} = i$ (i.e., the instance where $a_{i}$ is the {special} agent in block $j$). Therefore, this implies that $o_{j\sqrt{n}}$ cannot have been revealed (a contradiction), since, in this instance, for every agent other than the special agent $a_{i}$, $o_{j\sqrt{n}}$ is present in at least the $(n-\sqrt{n})$-th position in their preference list.
	
% 	Therefore, in total any online algorithm has to make at least $(\sqrt{n}-1)(n-\sqrt{n})$ queries in order to reveal at least $(\sqrt{n}-1)$ agents of the objects in $O_S$.

Finally, using the two claims above, we have that
\begin{align*}
\frac{\text{ALG}}{\text{OPT}} &\geq \frac{(\sqrt{n}-1)(n-2\sqrt{n})}{ 3n - 2\sqrt{n}}
\in \Omega(\sqrt{n}).
\end{align*}

Hence, the competitive ratio of any online algorithm is $\Omega({\sqrt{n}})$.	
\end{proof}
\fi

A natural strategy to elicit preferences is to simultaneously ask all the $n$ agents to reveal their next best object until the point at which the designer has enough information to compute an NPO matching. Recall that we can check the latter condition at every step by leveraging our Theorem~\ref{thm:npo-iff} from Section~\ref{sec:npo-iff}. Although this approach is natural, it is not difficult to see that it results in a competitive ratio of $\Omega(n)$. 

Let us describe the high-level idea of our algorithm (Algorithm~\ref{algo:elicit}), although the exact details vary slightly in the algorithm in order to optimize the competitive ratio. We follow the aforementioned na\"{\i}ve approach of asking all agents to report their next best object only up to a point where we can match at least $n-\sqrt{n}$ agents to objects they have revealed. At that point, we compute one such matching, focus on the unmatched agents (at most $\sqrt{n}$ of them), and then just ask these agents for their next best object until there is a matching of size at least $n-1$, which, by \Cref{thm:npo-iff}, is a sufficient condition for the existence of an NPO matching. Once we know an NPO matching exists, we use the polynomial time algorithm from \Cref{thm:npo-iff} to find one. This strategy, with the details optimized, results in an improved competitive ratio of $2(\sqrt{n} + 1)$, which, given Theorem~\ref{thm:lb-elicit-po}, is asymptotically optimal.

\begin{algorithm}[tb]
	\centering 
	\noindent\fbox{%
		\begin{varwidth}{\dimexpr\linewidth-4\fboxsep-4\fboxrule\relax}
			\begin{algorithmic}[1]
				\footnotesize
				\Input A profile of top-$k$ preferences $P = (P_1, \ldots, P_n)$, a set of agents $N$, and a set of objects $O$.
				\Output An NPO matching
								
				\State for each $i \in [n]$, initialize $P_i = \emptyset$  
				%\Comment{{\footnotesize $P_i$ denotes the preferences of $a_i$, which is initially empty}} 
				\State $\mathcal{M} \leftarrow \emptyset$; $k \leftarrow 1$; $s \leftarrow | \mathcal{M} |$
				%\State $k \leftarrow 1$
				%\State $s \leftarrow | \mathcal{M} |$ 
				\While{$s < (n - 1)$} \label{algstep:l3}
				\If{$s \leq (n - 1) - \min(\{k-1, \sqrt{n}\})$} \label{algstep:l1}
				\For{each $i \in [n]$} 
					\State $o \leftarrow \mathcal{Q}(a_i, |P_i|+1)$  
				% 	\Comment{{\footnotesize asking $a_i$ for their next preferred object}}
				% 	\State $P_i \leftarrow P_i \cup \{o\}$
					\State update $P_i$ by adding object $o$ at position $|P_i| + 1$
				\EndFor \label{algstep:l2}
				\ElsIf{$s > (n - 1) - \min(\{k-1, \sqrt{n}\})$}  \label{algstep:l4}
				% \While{$s < (n - 1)$} 
					\State $U \leftarrow$ the set of unmatched agents in $\mathcal{M}$ \label{algstep:l5} %\Comment{{\footnotesize focusing on the $(n-s)$  unmatched agents in $\mathcal{M}$}} 
					\For{each $i \in U$} 
					\State $o \leftarrow \mathcal{Q}(a_i, |P_i|+1)$  
					\State update $P_i$ by adding object $o$ at position $|P_i| + 1$
					\EndFor
				% \EndWhile
				\EndIf \label{algstep:l6}
				\State $k \leftarrow k+1$
				\State $\mathcal{M} \leftarrow$ max.\ matching in $\mathbb{G}\left(N \cup O,\rev(P) \right)$%; $s \leftarrow | \mathcal{M} |$ 
				% , where $E = \displaystyle\cup_{i\in[n]} \left\{(a_i, o_j) \mid j\in [n] \text{ and } o_j \in P_i\right\}$ %\Comment{{\footnotesize updating the size of the maximum matching}}
				\State $s \leftarrow | \mathcal{M} |$ 
				\EndWhile
				\State use algorithm described in Theorem~\ref{thm:npo-iff} to return a matching that is NPO w.r.t.\ ${P} = (P_1, \cdots, P_n)$ 
				% \State use ${P} = (P_1, \cdots, P_n)$ as input to Algorithm~\ref{algo:checknpo} and return the resulting matching \label{}   
			\end{algorithmic}
	\end{varwidth}}
	\caption{A $2(\sqrt{n} + 1)$-competitive elicitation algorithm}
	\label{algo:elicit}
\end{algorithm}

% \begin{theorem} \label{thm:qm1-ub}
% 	For the next-best query model, Algorithm~\ref{algo:elicit} is an elicitation algorithm that is $2(\sqrt{n} + 1)$-competitive.
% \end{theorem}

% \begin{theorem} \label{thm:ub-elicit-po}
% 	Algorithm~\ref{algo:elicit} is a $2(\sqrt{n} + 1)$-competitive elicitation algorithm in the next-best query model that computes a Pareto optimal matching w.r.t.\ the underlying preferences of the agents. 
% 	\todo{HH: we should replace all the instances of "wrt the underlying preferences", with just saying NPO or NRM, since we have already defined those.}
% 	{\color{red}
% 	Algorithm~\ref{algo:elicit} is a $2(\sqrt{n} + 1)$-competitive elicitation algorithm for computing a necessarily Pareto optimal (NPO) matching.}
% \end{theorem}

\begin{theorem} \label{thm:ub-elicit-po}
	Algorithm~\ref{algo:elicit} is a $2(\sqrt{n} + 1)$-competitive elicitation algorithm in the next-best query model for computing a necessarily Pareto optimal matching.
\end{theorem}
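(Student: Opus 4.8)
The plan is to bound the number of queries made by Algorithm~\ref{algo:elicit} and compare it against a lower bound on $\text{OPT}$. The algorithm runs in two phases. In the first phase (the branch at line~\ref{algstep:l1}), it queries all $n$ agents in lockstep; this continues as long as the current maximum matching size $s$ satisfies $s \le (n-1) - \min\{k-1, \sqrt{n}\}$. In the second phase (the branch at line~\ref{algstep:l4}), it queries only the unmatched agents until a matching of size $n-1$ is found. First I would establish a clean bound on how long the first phase lasts. The key quantity is the round number $k$ at which the algorithm switches phases or terminates. I would argue that after at most $\sqrt{n}+1$ rounds of querying everyone, either we already have a matching of size $\ge n-1$, or the guard $s \le (n-1)-\sqrt{n}$ fails and we enter the second phase; intuitively, once $k-1 \ge \sqrt{n}$ the threshold $\min\{k-1,\sqrt{n}\}$ saturates at $\sqrt{n}$, and the loop condition forces us into the unmatched-only branch as soon as $s > (n-1)-\sqrt{n}$, i.e. as soon as we can match all but $\sqrt{n}$ agents.

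\textbf{Counting queries in each phase.} I would bound Phase~1 queries by $n \cdot (\sqrt{n}+1)$, since we query all $n$ agents for at most $\sqrt{n}+1$ rounds. For Phase~2, the crucial observation is that at the start of Phase~2 at most $\sqrt{n}$ agents are unmatched (because $s > (n-1)-\sqrt{n}$ means $s \ge n-\sqrt{n}$, leaving at most $\sqrt{n}$ unmatched). Each subsequent round queries only these unmatched agents, and each query reveals a new object for one of them; I would argue that each such round increases the matching size by at least one until it reaches $n-1$, so Phase~2 lasts at most $\sqrt{n}$ rounds with at most $\sqrt{n}$ queries each, giving at most $n$ additional queries. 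Here I would need to be careful: the set of unmatched agents $U$ is recomputed each round, and I must verify that continuing to query only the current unmatched set suffices to drive the matching size up to $n-1$ — this is where Hall-type reasoning or a careful augmenting-path argument is needed, and is a subtle point rather than a routine count. Summing, $\text{ALG} \le n(\sqrt{n}+1) + n = n\sqrt{n} + 2n$.

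\textbf{Lower bound on OPT.} For the competitive ratio I would establish $\text{OPT} \ge \tfrac{n}{2}$, or more precisely that any algorithm (including the omniscient optimal one) must ask at least roughly $n$ queries: to certify an NPO matching via Theorem~\ref{thm:npo-iff}, we need a matching of size $\ge n-1$ w.r.t.\ the revealed preferences, which requires at least $n-1$ pairs in $\rev(P)$, hence at least $n-1$ queries total (each query adds at most one pair to $\rev(P)$, and each agent must be queried at least once to contribute a revealed pair). Thus $\text{OPT} \ge n-1$. Combining, the competitive ratio is at most
\[
\frac{\text{ALG}}{\text{OPT}} \le \frac{n\sqrt{n}+2n}{n-1} \le \frac{n(\sqrt{n}+2)}{n-1},
\]
which I would simplify to show it is at most $2(\sqrt{n}+1)$ for $n$ large enough (and check small cases directly). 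The clean target bound $2(\sqrt{n}+1)$ suggests the intended accounting is $\text{ALG} \le 2(\sqrt{n}+1)\cdot\text{OPT}$ with $\text{OPT} \ge$ the number of pairs needed; I would reconcile the constants by matching Phase~1's $n(\sqrt{n}+1)$ against $\text{OPT}$ times $\sqrt{n}+1$ and Phase~2's contribution against the rest.

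\textbf{Main obstacle.} The hard part will be the Phase~2 correctness argument — showing that restricting queries to the current unmatched agents never gets ``stuck'' below size $n-1$, i.e.\ that each round of querying the unmatched set strictly increases the maximum matching until it hits $n-1$. This requires an augmenting-path / Hall's-condition argument establishing that whenever the matching has size $< n-1$, some unmatched agent, upon revealing one more object, creates an augmenting path. The query-counting in both phases is then routine, and the $\text{OPT} \ge n-1$ lower bound is immediate from Theorem~\ref{thm:npo-iff}; reconciling the exact constant $2(\sqrt{n}+1)$ is a matter of careful bookkeeping rather than a conceptual difficulty.
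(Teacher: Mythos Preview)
Your proposal has a genuine gap: the claim that Phase~1 (lines~\ref{algstep:l1}--\ref{algstep:l2}) lasts at most $\sqrt{n}+1$ rounds is false. The guard $s \le (n-1)-\min\{k-1,\sqrt{n}\}$ does \emph{not} automatically fail once $k-1 \ge \sqrt{n}$; it fails only when $s > (n-1)-\sqrt{n}$, and nothing forces $s$ to reach that threshold quickly. Concretely, if all $n$ agents have identical preference rankings, then after $k$ rounds of Phase~1 the maximum matching has size exactly $k$, so Phase~1 continues until $k \approx n-\sqrt{n}$ and the algorithm makes $\Theta(n^2)$ queries. Your bound $\text{ALG} \le n\sqrt{n}+2n$ is therefore wrong on this instance, and the trivial lower bound $\text{OPT} \ge n-1$ would yield a ratio of $\Theta(n)$, not $O(\sqrt{n})$.

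The paper's proof does not try to bound the number $m$ of Phase~1 rounds absolutely. Instead it proves an instance-dependent lower bound on $\text{OPT}$ that scales with $m$: letting $s_j$ be the maximum matching size at the start of round $j$ and $X_j$ the number of agents to whom the optimal algorithm asks at least $j$ queries, one shows $X_j \ge n - s_j - 1$ for every $j \le m$ (if fewer agents were asked $j$ queries, $\text{OPT}$ could not certify a size-$(n-1)$ matching using only top-$(j-1)$ edges for the rest, contradicting the definition of $s_j$). Summing gives $\text{OPT} \ge \sum_{j=1}^m X_j \ge (n-1) + (m-1)\min\{m-1,\sqrt{n}\}$, which grows with $m$ and cancels the $nm$ term in $\text{ALG}$. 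Your Phase~2 accounting is also off: there is no argument that each round of querying unmatched agents increases the matching size by one, so the $\sqrt{n}$-round bound is unjustified; the paper instead bounds Phase~2 queries by $(\min\{m,\sqrt{n}\}+1)(n-m)$, using only that at most $\min\{m,\sqrt{n}\}+1$ agents are unmatched and each can be asked at most $n-m$ further queries. The ``main obstacle'' you identify (Phase~2 correctness) is thus not the real difficulty---termination is immediate since eventually every agent reveals everything---while the actual crux, the refined $\text{OPT}$ lower bound tied to the $s_j$'s, is missing from your plan.
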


\if\npoSecProofsinApp1
In order to prove this result, we first introduce the following notation. We maintain a graph $\bg = (N \cup O,\rev(P))$, where $P$ denotes the top-$k$ preference profile elicited so far. Let $s_j$ denote the size of the maximum matching in this graph during the $j$-th iteration of line~\ref{algstep:l3}. Next, let $m$ denote the number of times lines~\ref{algstep:l1}--\ref{algstep:l2} are run in Algorithm~\ref{algo:elicit}. Note that $m \geq 1$ (since $s_1 = 0$ and hence lines \ref{algstep:l1}--\ref{algstep:l2} are executed at least during the first iteration), and that the total number of iterations of the algorithm is $m+1$ (since the `else if', i.e., lines~\ref{algstep:l4}--\ref{algstep:l6} is executed just once). Also, note that for all $j \in [m]$, $s_j$ is also the size of the maximum matching when all agents have revealed their top $(j-1)$ objects in Algorithm~\ref{algo:elicit}. Finally, for $j \in [m]$, let $X_j$ denote the number of agents who are asked at least $j$ queries by the optimal algorithm. Now, we have the following claim, whose proof appears in \Cref{app:sec:proofs:clm1}. 

\else
 In order to prove this result, we first introduce the following notation and prove the claim below. Let $s_j$ denote the size of the maximum matching during the $j$-th iteration of line~\ref{algstep:l3}. Note that for all $j \in [m]$, $s_j$ is also the size of the maximum matching when all agents have revealed their top $(j-1)$ objects in Algorithm~\ref{algo:elicit}. Next, let $m$ denote the number of times lines~\ref{algstep:l1}--\ref{algstep:l2} are run in Algorithm~\ref{algo:elicit}. Note that $m \geq 1$, since $s_1 = 0$ and hence lines \ref{algstep:l1}--\ref{algstep:l2}, are executed at least during the first iteration, and that the total number of iterations of the algorithm is $m+1$ (since the `else if', i.e., lines~\ref{algstep:l4}--\ref{algstep:l6} is executed just once). Finally, for $j \in [m]$, let $X_j$ denote the number of agents who are asked at least $j$ queries by the optimal algorithm. Given this notation, we have the following claim about $\text{X}_j$.
\fi

\begin{claim} \label{clm:qm1-ub}
	For every $j \in [m]$, $X_j \geq (n-s_j-1)$.
\end{claim}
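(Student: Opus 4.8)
The plan is to bound the optimal algorithm's query count from below by examining the positions of the objects that its elicited preferences must be able to match. Since the optimal algorithm is guaranteed to reveal an NPO matching, \Cref{thm:npo-iff} tells us that its final elicited profile $P^{\text{opt}}$ admits some matching $M^{\text{opt}}$ of size at least $n-1$ w.r.t.\ $P^{\text{opt}}$; that is, at least $n-1$ agents are matched under $M^{\text{opt}}$ to objects they have revealed to the optimal algorithm. The goal is to account for these $n-1$ matched agents in terms of $s_j$ and $X_j$.

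The key step is to partition the matched agents according to the rank of the object they receive. For each matched pair $(a_i, M^{\text{opt}}(a_i))$, let $p_i$ be the position of $M^{\text{opt}}(a_i)$ in $a_i$'s (true) preference order. Because this object is revealed and we are in the next-best query model, the optimal algorithm must have queried $a_i$ at positions $1,\ldots,p_i$, hence asked $a_i$ at least $p_i$ queries. I would split the matched agents into a ``deep'' group with $p_i \ge j$ and a ``shallow'' group with $p_i \le j-1$. Every agent in the deep group was asked at least $j$ queries, so the size of this group is at most $X_j$ by definition of $X_j$.

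For the shallow group, I would observe that restricting $M^{\text{opt}}$ to these pairs yields a set of agent-object pairs in which every agent is assigned one of her top $(j-1)$ objects. This restricted set is therefore a valid matching in the graph $\bg = (N\cup O, \rev(P))$ obtained when every agent reveals exactly her top $(j-1)$ objects, and so its cardinality is at most the maximum-matching size in that graph, namely $s_j$. Combining the two bounds gives $n-1 \le |M^{\text{opt}}| \le s_j + X_j$, which rearranges to $X_j \ge n - s_j - 1$, as claimed.

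I expect the only delicate point to be justifying that the restriction of $M^{\text{opt}}$ to the shallow group is a legitimate matching in the top-$(j-1)$ graph, so that its size is genuinely bounded by $s_j$; this follows because a subset of a matching is still a matching and every shallow pair lies in $\rev(P)$ for the top-$(j-1)$ profile. Everything else is a counting argument driven by the next-best query constraint, with no case analysis or computation required.
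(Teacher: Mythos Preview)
Your proposal is correct and follows essentially the same approach as the paper: both invoke \Cref{thm:npo-iff} to obtain a matching of size at least $n-1$ in the optimal algorithm's elicited profile, then split the matched agents according to whether they were asked at least $j$ queries (bounded by $X_j$) or fewer (yielding a matching in the top-$(j-1)$ graph, bounded by $s_j$). The only cosmetic difference is that the paper phrases this by contradiction while you give the direct counting argument.
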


\if\npoSecProofsinApp0
\begin{proof}
	Suppose for the sake of contradiction that there exists $j \in [m]$ such that $X_j < n-s_j-1$. This implies that the remaining $n-X_j$ agents have been asked at most $j-1$ queries by the optimal algorithm. Let $L$ denote the set of these $n-X_j$ agents. 
	
	Now, by \Cref{thm:npo-iff}, the optimal algorithm must elicit preferences that admit a matching of size at least $n-1$. In such a matching, at least $n-X_j-1$ of the agents in $L$ must be matched to an object they have revealed, i.e., to one of their top $j-1$ objects. Thus, during the $j$-th iteration of line~\ref{algstep:l3}, there was a matching of size at least $n-X_j-1 > s_j$. This contradicts the fact that $s_j$ is defined as the size of the maximum matching during the $j$-th iteration. 
\end{proof}
\fi

Equipped with the notations and the claim above, we can now prove our theorem.

\begin{proof}[Proof of Theorem~\ref{thm:ub-elicit-po}]
	
	Let $\text{ALG}$ denote the number of queries asked in Algorithm~\ref{algo:elicit} and $m$ be as defined above. Since the algorithm asks $n$ queries during each time lines~\ref{algstep:l1}--\ref{algstep:l2} is run and it asks at most $(n-s_{m+1}) \cdot (n-m)$ queries during the execution of lines~\ref{algstep:l4}--\ref{algstep:l6} (the first term is maximum number of agents in set $U$ in line~\ref{algstep:l5} and the second arises from the fact that all these agents have revealed their top $m$ preferences), we have that
	\begin{align}
	\text{ALG} &\leq n\cdot m + (n-s_{m+1}) \cdot (n-m) \nonumber \\ 
	&< n\cdot m + (\min(\{m, \sqrt{n}\}) + 1) \cdot (n-m), \label{eq:qm1-1}
	\end{align}
	where the second step follows from the fact that $(n - s_{m+1}) < (1+\min(\{m, \sqrt{n}\}))$ (see line~\ref{algstep:l4} in Algorithm~\ref{algo:elicit}).
	
	Next, let $\text{OPT}$ denote the number of queries asked by the optimal algorithm. From Claim~\ref{clm:qm1-ub} we know that $X_j$, which is the number of agents who are asked at least $j$ queries by the optimal algorithm, is at least $(n-s_j-1)$. Therefore,
	\begin{align}
		\text{OPT} &\geq m\cdot X_m + \sum_{j=1}^{m-1} (m-j) \cdot (X_{m-j} - X_{m-j+1}) \nonumber \\
		&= X_m + X_{m-1} + \cdots + X_1 \nonumber \\
		&\geq n-1 + \sum_{j=2}^m (n-s_j-1) \tag{{\footnotesize using Claim~\ref{clm:qm1-ub} and the fact that $s_1 = 0$}} \nonumber  \\
		&\geq n-1 + (m-1)\cdot (n-s_m-1) \tag{{\footnotesize since $\forall j\in[m], s_j \leq s_{j+1}$}} \nonumber  \\ 
		&\geq n-1 + (m-1) \cdot \min(\{m-1, \sqrt{n}\}), 		\label{eq:qm1-2}
	\end{align}
		where the final inequality follows from the fact that $(n - 1 - s_{m}) \geq  \min(\{m-1, \sqrt{n}\})$ (see line~\ref{algstep:l1} in Algorithm~\ref{algo:elicit}).
Therefore, using~(\ref{eq:qm1-1}) and~(\ref{eq:qm1-2}), we have
\begin{align*}
\frac{\text{ALG}}{\text{OPT}} &\leq \frac{n\cdot m + (\min(\{m, \sqrt{n}\}) + 1) \cdot (n-m)}{n-1 + (m-1) \cdot \min(\{m-1, \sqrt{n}\})}\\
&\leq 2(\min(\{m, \sqrt{n}\}) + 1),
\end{align*}
where the last inequality follows by using $n \geq 2$, $m \geq 1$, and by considering the cases $m\leq \sqrt{n}$ and $m> \sqrt{n}$. 
\end{proof}

\section{Necessarily rank-maximal matchings}\label{sec:nrm} 
In the previous section, we devised algorithms for computing an NPO matching when one exists and when given a top-$k$ preference profile, and for eliciting a small amount of information to determine such matchings. 
Here, we focus on rank-maximality, which is a stronger notion of economic efficiency than Pareto optimality and widely used in real-world applications (see, e.g., \cite{garg2010assigning,abraham2006assignment,abraham2009matching}). To illustrate this notion, consider the top-$k$ preference profile $P$ with three agents, where $P_1=o_1\succ o_2 \succ o_3$, $P_2=o_1\succ o_2$ and $P_3=o_1$. The matching $M=\set{(a_1,o_3),(a_2,o_2),(a_3,o_1)}$ is NPO, but not NRM. To see this, consider the completion $R$ of $P$, where $R_1=o_1\succ o_2 \succ o_3$, $R_2=o_1\succ o_2 \succ o_3$, and $R_3=o_1\succ o_3 \succ o_2$. While $M$ has signature $(1,1,1)$ under $R$, matching $M'=\set{(a_1,o_1),(a_2,o_2),(a_3,o_3)}$ has a better signature $(1,2,0)$ under $R$. In fact, it is easy to verify that $M'$ is NRM w.r.t.\ $P$.

We begin, in \Cref{sec:determinNRM}, by presenting an algorithm that determines whether a given matching is necessarily rank-maximal (NRM) w.r.t.\ a given top-$k$ preference profile. In \Cref{sec:existNRM}, we build upon this algorithm to design another algorithm that decides if a given top-$k$ preference profile admits an NRM matching, and computes one if it does. Finally, we turn to the elicitation question, where we provide a lower bound on the competitive ratio of any elicitation algorithm that returns an NRM matching. We first introduce necessary additional definitions and notations.

\begin{definition}[rank-maximality w.r.t.\ top-$k$ preferences]
Given a top-$k$ preference profile $P$ and a matching $M$, let the \textit{signature} of $M$ w.r.t.\ $P$, denoted $\sig_P(M)$, be the tuple $(x_1,\ldots,x_n)$, where $x_\ell$ is the number of $(a_i, o_j) \in M\cap\rev(P)$ such that $o_j$ is the $\ell^{\text{th}}$ most preferred choice of $a_i$. We say that $(x_1,\ldots,x_n) \succ_{\text{sig}} (x'_1,\ldots,x'_n)$ if there exists a $h \in [n]$ such that $x_h > x'_h$ and $x_i = x'_i$ for all $i < h$; this is the lexicographic comparison of signatures. A matching $M$ is rank-maximal w.r.t.\ $P$ if  $\sig_P(M)$ is maximum under $\succ_{\text{sig}}$. In plain words, a matching is rank-maximal w.r.t.\ partial preference profile $P$ if it maximizes the number of agents matched to their top choice, subject to that the number of agents matched to their second choice, and so on.
\end{definition}
 
\begin{definition}[Extended signature]  Following Definition~\ref{def:RMM}, we define the \textit{extended signature} of a matching $M$ w.r.t.\ $P$, denoted $\extsig_P(M)$, to be the tuple $(x_1,\ldots,x_n)$, where, for all $i \in [n-1]$, $x_i$ is defined as in $\sig_P(M)$, while $x_n$ counts not only the number of agents matched to their $n^\text{th}$ choice revealed in $P$, but also the number of agents matched to an object they did not reveal in $P$.
\end{definition} 

\begin{definition}[Optimal signature] Given a top-$k$ preference profile $P$ over a subset of agents $S \subseteq N$ and a subset of objects $T \subseteq O$, as well as a set of ``forbidden'' pairs $F \subseteq S \times T$, let $\calM(S,T,F)$ denote the set of all matchings $M$ that match agents in $S$ to objects in $T$ and satisfy $M \cap F = \emptyset$. We define $\optsig_P(S,T,F) = \sup_{M \in \calM(S,T,F), R \in \calC(P)} \sig_R(M)$. Informally, this is the \textit{best possible signature} that any matching can achieve under any completion of $P$ while avoiding pairings from $F$. We use $\arg\optsig_P(S,T,F)$ to denote the matching $M \in \mm(S, T, F)$ that obtains this best signature under some completion $R \in \calC(P)$. Note that sometimes we use $\optsig_P$ to denote $\optsig_P(N,O,\emptyset)$. 
\end{definition}

\subsection{Determining whether a matching is NRM} \label{sec:determinNRM}
% At a high level, our algorithm to determine if a given matching $M$ is NRM under a given top-$k$ preference profile $P$ works as follows. It leverages the fact that given any $S \subseteq N$, $T \subseteq O$, and $F \subseteq S \times T$, we can efficiently compute $\optsig_P(S, T, F)$. \Cref{algo:optsig} formally describes how to do this; the proof of its correctness is in \Cref{app:sec4lemm:optsig}. 

At a high level, our algorithm to determine if a given matching $M$ is NRM under a given top-$k$ preference profile $P$ works as follows. It leverages the fact that given any $S \subseteq N$, $T \subseteq O$, and $F \subseteq S \times T$, we can efficiently compute $\optsig_P(S, T, F)$. \Cref{algo:optsig} formally describes how to do this and the following lemma shows its proof of correctness. However, before that we make the following remark.

\textbf{Remark:} \citet{irving2006rank} define rank-maximal matchings more generally for weak preference orders, in which several objects can be tied at the same position in the preference ranking of an agent. The signature of a matching with respect to a weak order profile still encodes, for all $\ell$, the number of agents who are matched to one of the objects in the $\ell^{\text{th}}$ position in their preference list. \citet{irving2006rank} construct an edge-labeled bipartite graph $\bg$, where edge $(a_i,o_k)$ has label $\ell$ if $o_k$ appears in position $\ell$ in agent $a_i$'s weak preference order. By doing so, they extend the notion of rank-maximal matchings to such edge-labeled bipartite graphs. We note that their efficient algorithm to compute a rank-maximal matching for such a graph remains valid even when some of the edges of the graph are deleted, as we do in Algorithm~\ref{algo:optsig}.

\begin{restatable}{lemma}{lemmOptsig} \label{lemm:optsig}
	Given a top-$k$ preference profile $P$ over a set of agents $S\subseteq N$ and a set of objects $T \subseteq O$, and a set of pairs $F \subseteq S \times T$, Algorithm~\ref{algo:optsig} computes $\optsig_P(S,T,F)$ and $\arg \optsig_P(S,T,F)$ in polynomial time.
\end{restatable}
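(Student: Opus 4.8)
The plan is to reduce the joint optimization $\sup_{M \in \calM(S,T,F),\, R \in \calC(P)} \sig_R(M)$ to a single rank-maximal matching computation on a suitably edge-labeled bipartite graph, and then invoke the algorithm of \citet{irving2006rank}, which (as flagged in the Remark above) remains valid after edges are deleted. Since $O$ is finite there are only finitely many pairs $(M,R)$, so the supremum under the total order $\succ_{\text{sig}}$ is attained and $\arg\optsig_P(S,T,F)$ is well defined. The central step is to observe that the two suprema decouple and that $\sup_R \sig_R(M)$ can be computed agent-by-agent for a \emph{fixed} $M$. If $a_i$ is matched to an object it revealed in $P_i$, that object's rank is already pinned down by $P_i$; if $a_i$ is matched to an object $o_j$ it did not reveal, then across completions $o_j$ may occupy any position in $\{|P_i|+1,\dots,n\}$, and placing it at the earliest such position $|P_i|+1$ is optimal. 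Indeed, moving a single matched object from a rank $b$ to a smaller rank $a$ shifts one unit of the signature from coordinate $b$ to coordinate $a$, and since the first coordinate where the tuples differ is $a$, this strictly increases the signature under $\succ_{\text{sig}}$; because each agent's completion may be chosen independently, these per-agent choices are simultaneously realizable by a single $R \in \calC(P)$. Hence for fixed $M$, $\sup_R \sig_R(M)$ equals the tuple obtained by charging each revealed match to the rank of its object in $P_i$ and each unrevealed match to rank $|P_i|+1$.

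Given this, I would build the edge-labeled bipartite graph $\bg$ with agent side $S$ and object side $T$, where the edge $(a_i,o_j)$ receives label equal to the rank of $o_j$ in $P_i$ when $(a_i,o_j) \in \rev(P)$, label $|P_i|+1$ otherwise, and where every pair in $F$ is deleted. By the decoupling above, the label-signature of any matching $M$ in $\bg$ (the tuple counting matched edges by label) coincides with $\sup_R \sig_R(M)$, and matchings of $\bg$ correspond exactly to matchings in $\calM(S,T,F)$ because the deleted edges are precisely the forbidden pairs. Consequently, maximizing the label-signature lexicographically over matchings of $\bg$---which is exactly what the rank-maximal matching algorithm of \citet{irving2006rank} does in polynomial time---yields both the value $\optsig_P(S,T,F)$ and a matching $M^\star$ attaining it, and this $M^\star$ is the desired $\arg\optsig_P(S,T,F)$. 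To finish the output I would exhibit an explicit witnessing completion $R^\star \in \calC(P)$ by extending each $P_i$ so that the object matched to $a_i$ sits at its charged rank; this is possible precisely because the labels were chosen to be simultaneously achievable.

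The main obstacle is the correctness of computing $\sup_R \sig_R(M)$ for fixed $M$ by the greedy per-agent placement, i.e., verifying that independently pushing every unrevealed matched object to rank $|P_i|+1$ is globally optimal: one must check that it is never dominated by a non-greedy placement (handled by the lexicographic shift argument) and that the per-agent choices never conflict (handled by the fact that completions are chosen coordinate-wise and each agent is matched to at most one object). A secondary point requiring care is that the guarantee of \citet{irving2006rank} must survive the deletion of the edges in $F$; this is exactly the extension noted in the Remark, so it can be cited rather than reproved. Everything else---finiteness giving attainment of the supremum, the equivalence $\sup_{M,R} = \sup_M \sup_R$, and the polynomial running time inherited from \citet{irving2006rank}---is routine.
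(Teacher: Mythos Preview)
Your proposal is correct and follows essentially the same approach as the paper: both argue that for a fixed matching $M$, $\sup_{R\in\calC(P)}\sig_R(M)$ is attained by placing each unrevealed matched object at position $|P_i|+1$ (the paper phrases this via the weak-order profile $P'$, you phrase it via edge labels, but these are the same construction), and then take the outer supremum over $M\in\calM(S,T,F)$ by invoking the rank-maximal matching algorithm of \citet{irving2006rank} on the resulting graph with the edges in $F$ removed. Your write-up is slightly more explicit than the paper's in justifying the per-agent greedy placement (the lexicographic-shift argument) and in exhibiting a witnessing completion $R^\star$, but the underlying argument is identical.
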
 
\begin{proof}
	First, note that Algorithm~\ref{algo:optsig} runs in polynomial time because the algorithm of \citet{irving2006rank} runs in polynomial time. Hence, we only argue the correctness of our algorithm below. 
	
	Let $\{k_i : a_i \in S\}$, $P'$, and $\bg$ be as described in Algorithm~\ref{algo:optsig}. Recall that $\mm(S,T,F)$ is the set of matchings that match agents in $S$ to objects in $T$ without using any pairs of agent-object from $F$. 
	
	First, we claim that for any matching $M \in \mm(S,T,F)$, $\sig_{P'}(M) = \sup_{R \in \mc(P)} \sig_R(M)$. To see this, note that the contribution of all pairs $(a_i,o_k) \in M \cap \rev(P)$ to the signature $\sig_R(M)$ is fixed regardless of $R$. Thus, the best signature is achieved under the completion $R$ in which, for every $(a_i,o_k) \in M \cap \unrev(P)$, object $o_k$ appears at position $k_i+1$ in $a_i$'s preference ranking. This is precisely what occurs under the weak order profile $P'$ as well. Hence, $\sig_{P'}(M) = \sup_{R \in \mc(P)} \sig_R(M)$. 
	
	Taking supremum over $M \in \mm(S,T,F)$ on both sides, we have that 
	\[
	\sup_{M \in \mm(S,T,F)} \sig_{P'}(M) = \sup_{M \in \mm(S,T,F)} \sup_{R \in \mc(P)} \sig_R(M).
	\]
	By definition, our algorithm computes the LHS, while the RHS is equal to $\optsig_P(S,T,F)$.
\end{proof}

Given the result above, our algorithm first checks if $M$ has size at least $n-1$ w.r.t.\ $P$; this is necessary for $M$ to even be NPO w.r.t.\ $P$, as the proof of \Cref{thm:npo-iff} shows. If $M$ has size $n$ w.r.t.\ $P$, then we argue that its signature w.r.t.\ $P$ (which is the same as its signature w.r.t.\ any completion $R$ of $P$) must be $\optsig_P$. On the other hand, suppose $M$ has size $n-1$ w.r.t.\ $P$, with $(a_i,o_j) \in M \cap \unrev(P)$ being the pair where an agent is matched to an object she did not reveal under $P$. Then, we argue that two conditions need to be met: (a) to ensure that another matching $M'$ with $(a_i,o_j) \in M'$ cannot have a better signature under any completion $R$, we require that the signature of $M$ w.r.t.\ $P$ be at least as good as $\optsig_P(N\setminus\set{a_i},O\setminus\set{o_j},\emptyset)$, and (b) to ensure that another matching $M'$ with $(a_i,o_j) \notin M'$ cannot have a better signature under any completion $R$, we require that the extended signature of $M$ w.r.t.\ $P$ be at least as good as $\optsig_P(N,O,\set{(a_i,o_j)})$. \Cref{algo:checkNRM} formalizes this and we show  its correctness below.

\begin{algorithm}[!tb]
	\centering
	\noindent\fbox{%
		\begin{varwidth}{\dimexpr\linewidth-4\fboxsep-4\fboxrule\relax}
			\begin{algorithmic}[1]
				\footnotesize
				\Input A top-$k$ preference profile $P$ over a set of agents $S\subseteq N$ and a set of objects $T \subseteq O$, and $F \subseteq S \times T$.
				
				\Output $\optsig_P(S,T,F)$ and $\arg \optsig_P(S,T,F)$
				
				\State For all $a_i \in S$, let $k_i \gets |P_i|$
				
				\State $P' \gets$ weak order profile which matches $P$ on the preferences revealed under $P$, and for every $(a_i, o_j) \in \unrev(P)$, has object $o_j$ at position $k_i+1$ in $P'_i$ \label{algo:optsig-l2}
				
				\State Create a bipartite graph $\mathbb{G} = (S \cup T, E)$, where $E = S \times T \setminus F$, and each edge $(a_i, o_j) \in E$ is labelled with $\ell$ if $o_j$ appears in position $\ell$ in $P'_i$. 

				\State $M \gets$ compute a rank-maximal matching of $\mathbb{G}$ using the algorithm by \citet{irving2006rank}				
				
				\State \Return $\sig_{P'}(M)$ and $M$
			\end{algorithmic}
		\end{varwidth}}
	\caption{Algorithm to compute $\optsig_P(S,T,F)$ and $\arg \optsig_P(S,T,F)$.}
	\label{algo:optsig}
\end{algorithm}

\begin{algorithm}[!tb]
	\centering
	\noindent\fbox{%
		\begin{varwidth}{\dimexpr\linewidth-4\fboxsep-4\fboxrule\relax}
			\begin{algorithmic}[1]
				\footnotesize
				\Input A top-$k$ preference profile $P$, a set of agents $N$, a set of objects $O$, and a matching $M$.
				
				\Output Is $M$ NRM w.r.t.\ $P$?
				\State $c \gets |M \cap \rev(P)|$.
				%\If{$c < n-1$}
				%\State \Return NO
				\If{$c = n$ and $\sig_P(M) \equivsig \optsig_P$}
				\State \Return YES \label{algo:checkNRM-l5}
				\ElsIf{$c = n-1$} %\Comment{$c = n-1$}
				\State Pick $(a_i,o_j) \in M \cap \unrev(P)$ \Comment{this is unique since $M$ has cardinality $n-1$ w.r.t.\ $P$.}
				%\State $P' \gets$ $P$ but without $P_i$ and object $o$ removed from all the preference lists in $P$  		
				%\State $P'' \gets$ $P$ but with object $o$ removed from $P_i$
				%\State $M_1 \gets \arg\optsig_{P'}$ %$M_2 \gets \arg \optsig_{P''}$
				\If {$\sig_P(M) \succeq_{\text{sig}} \optsig_P(N\setminus \{a_i\}, O\setminus \{o_j\}, \emptyset)$ and $\extsig_P(M) \succeq_{\text{sig}} \optsig_{P}(N, O, \set{(a_i, o_j)})$}
				\State \Return YES \label{algo:checkNRM-l11}
				\EndIf
				\EndIf				
				\State \Return NO				
			\end{algorithmic}
	\end{varwidth}}
	\caption{Algorithm to check if a given matching is NRM given a top-$k$ preference profile.}
	\label{algo:checkNRM}
\end{algorithm}

\begin{restatable}{theorem}{thmCheckNRM} \label{thm:checkNRM}
	Given a top-$k$ preference profile $P$ and a matching $M$, there exists a polynomial-time algorithm that determines if $M$ is NRM w.r.t.\ $P$. 
\end{restatable}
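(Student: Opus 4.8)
The plan is to show that Algorithm~\ref{algo:checkNRM} correctly decides NRM in polynomial time. The polynomial runtime is immediate from \Cref{lemm:optsig}, since the algorithm makes only a constant number of calls to the $\optsig_P(\cdot,\cdot,\cdot)$ subroutine (plus an $O(n)$ computation of $\sig_P(M)$ and $\extsig_P(M)$). So the substance is correctness. I would first establish a preliminary observation that an NRM matching must have size at least $n-1$ w.r.t.\ $P$: if two or more agents were matched to unrevealed objects, one could complete their preferences adversarially (as in the $(\Rightarrow)$ direction of the proof of \Cref{thm:npo-iff}) to make $M$ fail even Pareto optimality, hence fail rank-maximality, under some completion. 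This justifies that the algorithm returns NO whenever $c \le n-2$, and lets me split into the two cases $c = n$ and $c = n-1$ handled by the algorithm.

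For the case $c = n$, the key point is that when $M$ matches every agent to a revealed object, $\sig_R(M)$ is the same for every completion $R \in \mc(P)$, namely $\sig_P(M)$. Thus $M$ is NRM iff no matching $M'$ achieves a strictly better signature under any completion, i.e.\ iff $\sig_P(M) \equivsig \optsig_P$, which is exactly the test on line~\ref{algo:checkNRM-l5}. (One subtlety worth noting: a competitor $M'$ that leaves an agent on an unrevealed object has its contribution at that position folded into the optimistic completion used by $\optsig_P$, so comparing against $\optsig_P$ rather than $\sig_P$ of revealed-only matchings is the right benchmark.)

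The heart of the argument is the case $c = n-1$, where the unique pair $(a_i,o_j) \in M \cap \unrev(P)$ matters. Here $M$ is NRM iff for \emph{every} completion $R$ and every competitor $M'$, we have $\sig_R(M) \succeq_{\text{sig}} \sig_R(M')$. The strategy is to fix the worst-case completion for $M$ first: since the only flexibility in $\sig_R(M)$ is the position of $o_j$ in $R_i$, and $o_j$ is unrevealed, the adversary places $o_j$ at position $k_i+1 = n$ (the earliest unrevealed slot), which is precisely what $\extsig_P(M)$ records at its last coordinate, while $\sig_P(M)$ records the optimistic-for-everyone-else portion. I would then partition competitors $M'$ by whether they also assign $o_j$ to $a_i$. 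If $(a_i,o_j) \in M'$, then $M'$ and $M$ agree on that pair and the comparison reduces to matchings on the remaining $N \setminus \{a_i\}$ and $O \setminus \{o_j\}$; $M$ beats all such $M'$ under all completions iff $\sig_P(M) \succeq_{\text{sig}} \optsig_P(N\setminus\{a_i\}, O\setminus\{o_j\}, \emptyset)$, which is the first conjunct on line~\ref{algo:checkNRM-l11}. If $(a_i,o_j) \notin M'$, then $M$ must win even when the adversary penalizes $o_j$ maximally, so the correct benchmark for $M$ is its extended signature, and $M$ beats all such $M'$ iff $\extsig_P(M) \succeq_{\text{sig}} \optsig_P(N, O, \{(a_i,o_j)\})$, the second conjunct. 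Taking the conjunction over both families of competitors gives exactly the test on line~\ref{algo:checkNRM-l11}.

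The main obstacle I anticipate is making the two-benchmark decomposition airtight, in particular verifying that the two conjuncts are jointly \emph{necessary and sufficient} rather than merely sufficient. The delicate direction is necessity: I must argue that if either conjunct fails, then some completion $R$ and some competitor $M'$ witness $\sig_R(M') \succ_{\text{sig}} \sig_R(M)$ — using the matching $\arg\optsig_P(\cdot)$ guaranteed by \Cref{lemm:optsig} as the witness $M'$, and choosing $R$ as the completion that realizes that optimal signature while simultaneously being the worst case for $M$ (the two are compatible because $M$ and its optimal competitor interact only through the single unrevealed pair). I would also need to double-check the interaction between $\sig$ (which has its last coordinate counting only revealed $n$th choices) and $\extsig$ (which folds unrevealed matches into the last coordinate) so that the lexicographic comparisons on the first $n-1$ coordinates line up correctly across the two conjuncts and no competitor slips through by doing better only at the very last coordinate.
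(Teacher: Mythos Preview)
Your proposal is correct and follows essentially the same approach as the paper: the same case split on $c \in \{n, n-1\}$, and within the $c = n-1$ case the same partition of competitors by whether $(a_i, o_j) \in M'$, leading to exactly the two benchmarks $\optsig_P(N\setminus\{a_i\}, O\setminus\{o_j\}, \emptyset)$ and $\optsig_P(N, O, \{(a_i,o_j)\})$ that the paper uses. One small slip to clean up: the adversary's worst placement of $o_j$ in $R_i$ is position $n$ (the \emph{latest} slot), not ``$k_i+1$, the earliest unrevealed slot'' --- these coincide only when $k_i = n-1$ --- but since you correctly invoke $\extsig_P(M)$, which records the unrevealed pair at coordinate $n$, this does not affect the substance of your argument.
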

\begin{proof}
	To prove this, we will show that Algorithm~\ref{algo:checkNRM} returns YES if and only if $M$ is NRM w.r.t.\ $P$.
	
	\proofcase{$(\Rightarrow)$} Suppose Algorithm~\ref{algo:checkNRM} returns YES. Then, one of two conditions is met: (1) $M$ is a perfect matching w.r.t. $P$ and $\sig_P(M) \equivsig \optsig_P$; we refer to this as Case~(a) below; or (2) $M$ is of size $n-1$ w.r.t. $P$, $\sig_P(M) \succeq_{\text{sig}} \optsig_P(N\setminus \{a_i\}, O\setminus \{o_j\}, \emptyset)$, and $\extsig_P(M) \succeq_{\text{sig}} \optsig_{P}(N, O, \set{(a_i, o_j)})$; we refer to this as Case~(b) below. Next, we argue that in both cases, $M$ is NRM w.r.t. $P$.
	
	\textbf{Case (a).} Since $M$ is a perfect matching w.r.t. $P$, note that, for all $R \in \calC(P)$, $\sig_P(M) \equivsig \sig_R(M)$. Combined with the fact that $\optsig_P \equivsig \sig_P(M)$, we have that $\optsig_P = \sig_R(M)$ for all completions $R \in \mc(P)$. Since $\optsig_P$ is the best signature that any matching can have under any completion of $P$, this implies that $M$ is rank-maximal w.r.t. every completion of $P$, i.e., that it is NRM w.r.t. $P$. 	
	
	\textbf{Case (b).} Let $(a_i, o_j) \in M \cap \unrev(P)$.  Also, let $\mm_1 \subseteq \mm$ be the set of matchings that match $a_i$ to $o_j$ (i.e. $(a_i,o_j) \in M_1$ for every $M_1 \in \mm_1$), and $\mm_2 = \mm \setminus \mm_1$ be the set of matchings which match $a_i$ to an object other than $o_j$. 
		
	First, we show that for any $R \in\calC(P)$ and $M_1 \in \mm_1$, $\sig_R(M) \succeq_{\text{sig}} \sig_R(M_1)$. Suppose for contradiction that there exist $M_1 \in \mm_1$ and $R \in \mc(P)$ such that $\sig_R(M_1) \succ_{\text{sig}} \sig_R(M)$. Since $(a_i, o_j) \in M_1$ and $(a_i, o_j) \in M$, we have that $\sig_R(M'_1) \succ_{\text{sig}} \sig_R(M')$, where $M_1' = M_1 \setminus \{(a_i, o_j)\}$ and $M' = M \setminus \{(a_i, o_j)\}$. This in turn implies $\optsig_{P}(N \setminus \{a_i\}, O\setminus \{o_j\}, \emptyset) \succeq_{\text{sig}} \sig_R(M'_1) \succ_{\text{sig}} \sig_R(M') \equivsig \sig_{P}(M)$, where the first transition follows from the definition of $\optsig_{P}(N \setminus \{a_i\}, O\setminus \{o_j\}, \emptyset)$ and the last transition follows from the fact that $(a_i,o_j) \in \unrev(P)$ does not affect the signature of $M$ under $P$. However, this contradicts the fact that $\sig_P(M) \succeq_{\text{sig}} \optsig_{P}(N \setminus \{a_i\}, O\setminus \{o_j\}, \emptyset)$.
	
	Next, we show that for any $R \in\calC(P)$ and $M_2 \in \mm_2$, $\sig_R(M) \succeq_{\text{sig}} \sig_R(M_2)$.
	Suppose for contradiction that there exist $M_2 \in \mm_2$ and $R \in \mc(P)$ such that $\sig_R(M_2) \succ_{\text{sig}} \sig_R(M)$. This in turn implies 
	$\optsig_{P}(N, O, \set{(a_i, o_j)}) \succeq_{\text{sig}} \sig_{R}(M_2) \succ_{\text{sig}} \sig_R(M) \succeq_{\text{sig}} \extsig_P(M)$, where the first transition follows from the definition of $\optsig_{P}(N, O, \set{(a_i, o_j)})$ and the fact that $(a_i, o_j) \notin M_2$, and the last transition follows from the definition of $\extsig_P(\cdot)$. However, note that this contradicts the fact that $\extsig_P(M) \succeq_{\text{sig}} \optsig_{P}(N, O, \set{(a_i, o_j)})$.
	
	Finally, since we showed that $\sig_R(M) \succeq_{\text{sig}} \sig_R(M')$ for every $M' \in \mm_1 \cup \mm_2 = \mm$ and every $R \in\calC(P)$, we have that $M$ is rank-maximal w.r.t. every completion of $P$, i.e., that it is NRM w.r.t. $P$. 

	\proofcase{$(\Leftarrow)$} Suppose that $M$ is NRM w.r.t. $P$. Then, it must be NPO w.r.t. $P$ as well. Hence, as we show in the proof of \Cref{thm:npo-iff}, $M$ must have size at least $n-1$ w.r.t. $P$. So, we again have the following two cases. Either $M$ is a perfect matching w.r.t. $P$, which we refer to as Case~(a) below, or $M$ is of size $n-1$ w.r.t. $P$, which we refer to as Case~(b) below.
	
    \textbf{Case (a).} First, observe that $\sig_P(M) \equivsig \sig_R(M)$ for all $R \in \calC(P)$ because $M$ is a perfect matching w.r.t. $P$. Second, since $M$ is NRM w.r.t. $P$, it is rank-maximal under every completion of $P$. Hence, $\sig_P(M)$ is the best signature any matching can have under any completion of $P$, i.e., $\sig_P(M) \equivsig \optsig_P$. Thus, Algorithm~\ref{algo:checkNRM} must return YES in line~\ref{algo:checkNRM-l5}.

	\textbf{Case (b).} Let $(a_i, o_j) \in M \cap~\unrev(P)$. Next, let us consider $M_1 = \arg\optsig_{P}(N \setminus \{a_i\}, O\setminus \{o_j\}, \emptyset)$ and let $R \in \calC(P)$ be the completion at which $M_1$ attains the signature $\optsig_{P}(N \setminus \{a_i\}, O\setminus \{o_j\})$. We will show that $\sig_P(M) \succeq_{\text{sig}} \optsig_{P}(N \setminus \{a_i\}, O\setminus \{o_j\}, \emptyset)$. To see this, let us assume that this is not the case and that $\optsig_{P}(N \setminus \{a_i\}, O\setminus \{o_j\}, \emptyset) \equivsig \sig_R(M_1) \succ_{\text{sig}} \sig_P(M)$. Next, let $M_1' = M_1 \cup \{a_i, o_j\}$, and let us consider the completion $R' \in \calC(P)$ which is the same as $R$, except that $o_j$ is moved to the $n^{\text{th}}$ position in the preference list of $a_i$. Since $M \cap \unrev(P) = \set{(a_i,o_j)}$, $(a_i,o_j) \in M_1$, and $\sig_R(M_1) \succ_{\text{sig}} \sig_P(M)$, one can see that $\sig_{R'}(M'_1) \succ_{\text{sig}} \sig_{R'}(M)$. Hence, $M$ is not rank-maximal w.r.t. $R'$, which contradicts the fact that $M$ is NRM w.r.t. $P$. Therefore, we have that
    \begin{equation}
    \sig_P(M) \succeq_{\text{sig}} \optsig_{P}(N \setminus \{a_i\}, O\setminus \{o_j\}, \emptyset). \label{eqn1:lemm-checkNRM}
    \end{equation}
		
	Finally, let us consider $M_2 =\arg\optsig_{P}(N, O, \set{(a_i, o_j)})$, and let $R \in \calC(P)$ be the completion at which $M_2$ attains the signature $\optsig_{P}(N, O, \set{(a_i, o_j)})$. Since $(a_i, o_j) \notin M_2$, consider the completion $R' \in \calC(P)$ which is the same as $R$, except object $o_j$ is moved to the $n^{\text{th}}$ position in the preference list of $a_i$. Note that $\sig_{R'}(M_2) \succeq_{\text{sig}} \sig_{R}(M_2) \equivsig \optsig_{P}(N, O, \set{(a_i, o_j)})$. However, since $M$ is NRM w.r.t. $P$, we also have $\extsig_P(M) = \sig_{R'}(M) \succeq_{\text{sig}} \sig_{R'}(M_2)$. Hence, we have 		
	\begin{equation}
	\extsig_P(M) \succeq_{\text{sig}} \optsig_{P}(N, O, \set{(a_i, o_j)}). \label{eqn2:lemm-checkNRM}
	\end{equation}
	
	Observations~(\ref{eqn1:lemm-checkNRM}) and~(\ref{eqn2:lemm-checkNRM}) together imply that  Algorithm~\ref{algo:checkNRM} will return YES in line~\ref{algo:checkNRM-l11}.
	
	This establishes the correctness of Algorithm~\ref{algo:checkNRM}. To see that the algorithm runs in polynomial time, it is sufficient to observe that Algorithm~\ref{algo:optsig} runs in polynomial time.
\end{proof}

\subsection{Computing an NRM matching, when it exists} \label{sec:existNRM}

The results from \Cref{sec:determinNRM} are key to devising an algorithm that determines if a given top-$k$ preference profile admits an NRM matching and computes one if it does. Specifically, note again that a matching that is NRM w.r.t.\ a top-$k$ preference profile $P$ must have size at least $n-1$ w.r.t.\ $P$. If there is an NRM matching of size $n$ w.r.t.\ $P$, then we argue that it must be rank-maximal w.r.t.\ $P$ and have signature $\optsig_P$. We can check this efficienty using the results from \Cref{sec:determinNRM}. If there is an NRM matching of size $n-1$ w.r.t.\ $P$ with $(a_i,o_j) \in M \cap \unrev(P)$, then we argue that $M \setminus \set{(a_i,o_j)}$ must be a rank-maximal matching of $N\setminus\set{a_i}$ to $O\setminus\set{o_j}$ w.r.t.\ $P$. By iterating over all agent-object pairs $(a_i, o_j)$, we can again check this efficiently. \Cref{algo:existNRM} formalizes this and the next result proves its correctness.

\begin{algorithm}[!t]
	\centering
	\noindent\fbox{%
		\begin{varwidth}{\dimexpr\linewidth-4\fboxsep-4\fboxrule\relax}
			\begin{algorithmic}[1]
				\footnotesize
				\Input A top-$k$ preference profile $P$, agents $N$, and objects $O$.
				
				\Output Returns an NRM matching if $P$ admits one, and returns NO otherwise.
				
				\State $M' \gets$ rank-maximal matching w.r.t.\ $P$   
				
				\If{$|M' \cap rev(P)| = n$ and $\sig_P(M) \equivsig \optsig_P$} \label{algo:existNRM-l2}
				\State \Return $M'$ 
				
				\Else
				
				\For{$(a_i,o_j) \in \unrev(P)$}
				
				%\State $P' \gets$ $P$ with agent $a_i$ and object $o_j$ removed 
				
				%\State $X \gets$ rank-maximal matching of $N \setminus \{a_i\}$ to $O \setminus\{o_j\}$ w.r.t.\ $P'$ \label{algo:existNRM-l7}
				
				\State $X \gets$ rank-maximal matching of $N \setminus \{a_i\}$ to $O \setminus\{o_j\}$ w.r.t.\ $P$ \label{algo:existNRM-l7}
				
				\State $M' \gets X \cup \set{(a_i,o_j)}$
				
				\State \textbf{if} {$M'$ is NRM w.r.t.\ $P$}, \textbf{then} return $M'$ \label{algo:existNRM-l9} %\Comment{use Algo.~\ref{algo:checkNRM}} 
				
				% \State \Return $M'$ \label{algo:existNRM-l10}
				
				% \EndIf
				
				\EndFor
				\EndIf
				
				\State \Return NO
			\end{algorithmic}
	\end{varwidth}}
	\caption{Algorithm to check if an NRM matching exists given a top-$k$ preference profile.}
	\label{algo:existNRM}
\end{algorithm}

\begin{restatable}{theorem}{thmExistNRM} \label{thm:existNRM}
Given a top-$k$ preference profile $P$, there exists a polynomial-time algorithm that returns a necessarily rank-maximal matching if one exists, and returns NO otherwise.
\end{restatable}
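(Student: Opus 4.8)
The plan is to establish correctness of \Cref{algo:existNRM} in two directions—soundness (anything returned is NRM) and completeness (an NRM matching is returned whenever one exists)—and then note that every subroutine is polynomial. Throughout I would use the fact, established in the proof of \Cref{thm:npo-iff}, that any NRM matching is in particular NPO and hence has size at least $n-1$ w.r.t.\ $P$. This splits the analysis into a size-$n$ case and a size-$(n-1)$ case, exactly mirroring the case analysis underlying \Cref{thm:checkNRM}.

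Soundness is quick. The size-$n$ branch (line~\ref{algo:existNRM-l2}) returns $M'$ only when $M'$ is a perfect matching w.r.t.\ $P$ with $\sig_P(M') \equivsig \optsig_P$, and by Case~(a) of \Cref{thm:checkNRM} such a matching is NRM. The size-$(n-1)$ branch (line~\ref{algo:existNRM-l9}) returns $M'$ only after an explicit verification via \Cref{algo:checkNRM}, so soundness there is immediate. Consequently, if no NRM matching exists, neither branch ever fires and the algorithm correctly returns NO.

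The substance is completeness; suppose an NRM matching $M^\ast$ exists. If some NRM matching has size $n$, I would argue that the rank-maximal matching $M'$ of line~\ref{algo:existNRM-l2} must itself have size $n$ with signature $\optsig_P$: by Case~(a) of \Cref{thm:checkNRM} we have $\sig_P(M^\ast) \equivsig \optsig_P$, the total of $\optsig_P$ equals $n$ since $M^\ast$ matches every agent to a revealed object, and a $\sig_P$-rank-maximal matching can neither exceed nor fall short of this supremum—so line~\ref{algo:existNRM-l2} fires. If instead every NRM matching has size $n-1$, fix one, $M^\ast$, with unique unrevealed pair $(a_i,o_j)$, and set $Y = M^\ast \setminus \set{(a_i,o_j)}$. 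Combining the condition $\sig_P(M^\ast) \succeq_{\text{sig}} \optsig_P(N\setminus\set{a_i}, O\setminus\set{o_j}, \emptyset)$ guaranteed by \Cref{thm:checkNRM} with the reverse relation $\optsig_P(N\setminus\set{a_i}, O\setminus\set{o_j}, \emptyset) \succeq_{\text{sig}} \sig_P(Y)$ (valid since $\sig_P(Y) \equivsig \sig_R(Y)$ for all completions) and $\sig_P(Y) \equivsig \sig_P(M^\ast)$, I obtain $\sig_P(Y) \equivsig \optsig_P(N\setminus\set{a_i}, O\setminus\set{o_j}, \emptyset)$. The decisive consequence is that this optimal signature has total $n-1$, so the matching $X$ found in line~\ref{algo:existNRM-l7} for the pair $(a_i,o_j)$ must match all of $N\setminus\set{a_i}$ to \emph{revealed} objects, making $M' = X \cup \set{(a_i,o_j)}$ of size exactly $n-1$ w.r.t.\ $P$ with $(a_i,o_j)$ its unique unrevealed pair. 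The two conditions of \Cref{thm:checkNRM} then follow: the first holds as $\sig_P(M') \equivsig \sig_P(X) \equivsig \optsig_P(N\setminus\set{a_i}, O\setminus\set{o_j}, \emptyset)$, and the second from the identity $\extsig_P(M') \equivsig \extsig_P(M^\ast)$ (both equal $\optsig_P(N\setminus\set{a_i}, O\setminus\set{o_j}, \emptyset)$ with the unrevealed pair added to the last coordinate) together with the fact that the NRM matching $M^\ast$ already satisfies $\extsig_P(M^\ast) \succeq_{\text{sig}} \optsig_P(N, O, \set{(a_i,o_j)})$. Hence $M'$ is NRM and the loop returns a matching rather than NO.

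I expect the main obstacle to be precisely the step that $X$ uses only revealed objects, which is what guarantees $M'$ has size $n-1$ so that the size-$(n-1)$ characterization of \Cref{thm:checkNRM} applies. This hinges on reading ``rank-maximal w.r.t.\ $P$'' through $\sig_P$, which counts only revealed pairs, so that an optimal signature of total $n-1$ forces all $n-1$ agents to revealed objects; a matching routing any agent to an unrevealed object would create a second unrevealed pair, destroy NPO, and fail the check. Once the sizes are pinned down, reducing both NRM conditions to properties already guaranteed for $M^\ast$ is routine. Finally, polynomial running time follows because line~\ref{algo:existNRM-l2} and each of the at most $|\unrev(P)| \le n^2$ loop iterations invoke only the polynomial-time routines of \Cref{algo:optsig} and \Cref{algo:checkNRM}.
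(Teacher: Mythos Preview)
Your proposal is correct and follows essentially the same approach as the paper: establish soundness via \Cref{thm:checkNRM}, then prove completeness by splitting into the size-$n$ and size-$(n-1)$ cases and, in the latter, showing that the rank-maximal matching $X$ on $N\setminus\{a_i\}$, $O\setminus\{o_j\}$ inherits the relevant properties from the given NRM matching $M^\ast$. You are in fact more explicit than the paper in pinning down that $X$ uses only revealed objects and in directly verifying the two conditions of \Cref{thm:checkNRM} for $X \cup \{(a_i,o_j)\}$; the paper handles this step by a terse analogy to Case~(a).
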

\begin{proof}
We want to show that $P$ admits an NRM matching if and only if Algorithm~\ref{algo:existNRM} returns a matching (i.e. does not return NO). Moreover, in case that the algorithm returns a matching, we want to show that it is in fact NRM w.r.t.\ $P$. 

\proofcase{$(\Rightarrow)$} Let $M$ be an NRM matching w.r.t.\ $P$. Then, it must be NPO w.r.t.\ $P$. Hence, as we show in the proof of \Cref{thm:npo-iff}, $M$ must have size at least $n-1$ w.r.t.\ $P$. So, we again have the following two cases. Either $M$ is a perfect matching w.r.t.\ $P$, which we refer to as Case~(a) below, or $M$ is of size $n-1$ w.r.t.\ $P$, which we refer to as Case~(b) below.
%\todo{Define, at the beginning of Appendix B, signature and rank-maximality w.r.t. top-$k$ preferences.} 

\textbf{Case (a).} Observe that $\sig_P(M) \equivsig\sig_R(M)$ for all $R \in \calC(P)$ since $M$ is a perfect matching w.r.t.\ $P$. Second, note that $\sig_P(M) \equivsig \optsig_P$, since otherwise it follows from the definition of $\optsig_P$ that there is a matching $M'$ and a completion $R \in \calC(P)$ such that $\sig_{R}(M') \succ_{\text{sig}} \sig_R(M) = \sig_P(M)$, which in turn contradicts the fact that $M$ is NRM w.r.t.\ $P$. Finally, it is easy to see that since $M$ has size $n$ w.r.t.\ $P$ and it is NRM w.r.t. $P$, it has to be rank-maximal w.r.t.\ to $P$. Combining these observations, we have that any matching $M'$ that is a rank-maximal matching w.r.t. $P$ must have size $n$ and the same signature (under $P$) as $\optsig_P$. %if $P$ has a rank-maximal matching $M'$ of size $n$ that has the same signature as $\optsig_P$, then such a matching is also NRM w.r.t.\ $P$. These in turn are precisely the conditions that are checked in line~\ref{algo:existNRM-l2} in Algorithm~\ref{algo:existNRM}. Hence, given that we have shown the correctness of Algorithm~\ref{algo:optsig} in Lemma~\ref{lemm:optsig}, in this case Algorithm~\ref{algo:existNRM} returns an NRM matching.
Hence, the check in line~\ref{algo:existNRM-l2} will succeed, and the algorithm will return matching $M'$, which is NRM w.r.t. $P$.  
	
\textbf{Case (b).} Let $(a_i,o_j) \in M \cap \unrev(P)$. As in Case~(a), we can argue that $M \setminus \set{(a_i,o_j)}$ is NRM w.r.t. $P$ among all matchings that match $N\setminus\set{a_i}$ to $O\setminus\set{o_j}$. Let $X$ be any matching of $N\setminus\set{a_i}$ to $O\setminus\set{o_j}$ that is rank-maximal w.r.t. $P$. Then, this implies that $X \cup \set{(a_i,o_j)}$ is also an NRM matching. Observe that this is precisely the condition that is checked in line~\ref{algo:existNRM-l9} of Algorithm~\ref{algo:existNRM}. Hence, the algorithm returns an NRM matching w.r.t. $P$ in this case too. 

\proofcase{$(\Leftarrow)$} Suppose $P$ does not have an NRM matching. Note that Algorithm~\ref{algo:existNRM} returns a matching only under two cases. First is when there is a matching $M'$ such that $M'$ is of size $n$ w.r.t.\ $P$ and has the same signature as $\optsig_P$. Note that if such a matching exists, then it is easy to see from the definition of $\optsig_P$ and the fact that $M'$ is a perfect matching w.r.t.\ $P$ that it has to be NRM w.r.t. $P$, which in turn contradicts our assumption that $P$ does not have an NRM. 

The other case is when Algorithm~\ref{algo:existNRM} returns a matching in line~\ref{algo:existNRM-l9}, and here, the proof trivially follows from \Cref{thm:checkNRM}, which proves the correctness of Algorithm~\ref{algo:checkNRM}.

Finally, one can see that \Cref{algo:existNRM} runs in polynomial time since Algorithms~\ref{algo:optsig} and~\ref{algo:checkNRM} run in polynomial time. 
\end{proof}

% \subsection{Elicitation to compute an NRM matching}
% \label{sec:NRM_Elicit}

\subsection{Elicitation to compute an NRM matching}

We are now ready to consider the question of eliciting information to compute an NRM matching. Below we provide a lower bound on the competitive ratio of any elicitation algorithm.

\begin{restatable}{theorem}{thmLBnrm} \label{thm:nrm}
In the next-best query model, for $n \ge 2$, if there exists an $\alpha$-competitive elicitation algorithm for finding a necessarily rank-maximal matching, then $\alpha \ge \frac{4}{3} - \frac{2}{n}$, if $n$ is even, and $\alpha \ge \frac{4}{3} - \frac{20}{9n-3}$, if $n$ is odd. 
\end{restatable}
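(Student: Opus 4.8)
The statement is a lower bound on the competitive ratio for NRM elicitation, so the plan is to exhibit a concrete family of instances on which any online algorithm is forced to ask many more queries than the optimal offline algorithm. By contrast with the $\Omega(\sqrt n)$ bound for NPO, here we only need a constant factor, so I would aim for a small, explicit gadget rather than the elaborate block construction of \Cref{thm:lb-elicit-po}. The natural idea is to pair up agents and objects into independent $2$-agent/$2$-object sub-instances: in each pair, two agents have the same top choice $o$, so exactly one of them can be matched to its favorite and the identity of ``who is the lucky one'' is exactly the bit the algorithm must resolve. Crucially, to \emph{certify} rank-maximality necessarily, the designer must know enough of both agents' rankings to be sure that the signature cannot be beaten under any completion; this forces the online algorithm to query the second choice of the ``loser'' agent, which the adversary can place adversarially. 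The optimal algorithm, knowing the true preferences, avoids this extra probing.

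First I would set up the gadget precisely for even $n$: partition the $n$ agents into $n/2$ identical pairs, with each pair contending for a shared top object and having distinct second choices. Using \Cref{thm:checkNRM} (the characterization of NRM via $\optsig$ and $\extsig$), I would compute, on one hand, the minimum number of queries OPT needs to certify an NRM matching when the preferences are known, and on the other, argue via an adversary argument that any online algorithm must, on the worst instance in the family, ask strictly more. The adversary, within each pair, defers committing to which agent is matched to its top choice until the online algorithm has queried both agents deeply enough; this is the analogue of ``hiding the special agent'' from the NPO proof, but now collapsed to a $2$-element decision per pair. I would then compute the ratio $\text{ALG}/\text{OPT}$ exactly, arranging the counts so that the ratio equals $\frac{4}{3} - \frac{2}{n}$; the leading constant $4/3$ should emerge from a ``$3$ vs.\ $4$ queries per pair'' type comparison (e.g.\ OPT uses about $3$ queries where ALG is forced into about $4$), with the $-2/n$ correction coming from boundary pairs that cannot be fully exploited.

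For odd $n$, I would handle the leftover agent separately, which changes the exact additive correction and yields the stated $\frac{4}{3} - \frac{20}{9n-3}$. This amounts to a slightly different accounting of queries on the single unpaired agent together with one adjacent pair, so I would isolate that case and recompute the ratio rather than trying to unify the two formulas.

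The main obstacle I expect is the lower-bound direction: rigorously arguing that the online algorithm \emph{cannot} certify NRM without the extra queries. Unlike NPO, where \Cref{thm:npo-iff} gives a clean ``size $\ge n-1$'' certificate, NRM has no such simple analytic condition (as the paper itself notes), so I must invoke the two-sided $\optsig$/$\extsig$ conditions of \Cref{thm:checkNRM} and show that with insufficiently elicited preferences there always exists a completion under which the proposed matching fails rank-maximality (i.e.\ some swap or reassignment along a ``would-be'' better signature is consistent with the revealed top-$k$ data). Making this indistinguishability argument airtight for the adversary — showing the two candidate completions within each pair are consistent with everything queried so far, yet demand different matchings to stay NRM — is the crux, and I would devote the bulk of the proof to formalizing exactly when the $\extsig_P(M) \succeq_{\text{sig}} \optsig_P(N,O,\set{(a_i,o_j)})$ condition can and cannot be verified from partial information.
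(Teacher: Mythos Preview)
Your high-level plan---pair agents into $n/2$ blocks sharing a top object, then run an adversary argument within each block---matches the paper's construction. However, you are planning to work much harder than necessary on the lower-bound step. You propose to invoke the full $\optsig$/$\extsig$ characterization of \Cref{thm:checkNRM} and argue carefully about when the $\extsig_P(M)\succeq_{\text{sig}}\optsig_P(N,O,\{(a_i,o_j)\})$ condition can be verified from partial data; you even flag this as ``the crux'' and ``the bulk of the proof.'' The paper sidesteps all of this with a one-line observation: any NRM matching is in particular NPO, so by \Cref{thm:npo-iff} the elicited profile must admit a matching of size at least $n-1$. That single implication is already enough to force the online algorithm, in all but one block, to keep querying until it finds the agent whose second choice is the block's ``private'' object $o_{n/2+t}$; the adversary simply makes the first second-choice query in each block return the useless object, yielding $4(n/2-1)+1=2n-3$ queries. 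No signature comparison is needed.

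Relatedly, your description of the gadget (``distinct second choices'') is too vague to make the argument go through. The paper's construction is specific: in block $t$, one agent's second choice is $o_{n/2+t}$ (an object that appears high \emph{only} in this block), while the other's is $o_{(t+1)\bmod(n/2)}$ (the top choice of the next block, hence already spoken for). This asymmetry is exactly what makes the objects $o_{n/2+1},\ldots,o_n$ revealable only via the ``special'' agents, so that the size-$(n-1)$ requirement translates directly into an identification task the adversary can exploit. With arbitrary distinct second choices, the size-$(n-1)$ condition need not force the extra query. For odd $n$, the paper simply adjoins one dummy agent $a^*$ whose unique top choice $o^*$ is last for everyone else; both OPT and ALG pay one extra query, and recomputing the ratio gives $\frac{2n-4}{(3n-1)/2}=\frac{4}{3}-\frac{20}{9n-3}$.
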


\begin{proof}
	Let us first consider the case when $n$ is even. Let us partition the set of agents $N$ and the set of objects $O$ in the following way. For $t \in [n/2]$, we define $N_t = \set{a_t,a_{n/2 + t}}$ and $O_t = \set{o_t,o_{n/2 + t}}$. Also, for $t \in [n/2]$, we refer to $N_t$ as a \textit{block}. Note that $N = \cup_{t=1}^{n/2} N_t$ and $O = \cup_{t=1}^{n/2} O_t$.
	
	Next, consider a family of instances $\mathcal{F}$ defined as follows. For each $t \in [n/2]$, both agents in $N_t$ have $o_t$ as their top choice; the second choice for one of the agents in $N_t$ (an agent who we will refer to as \textit{special} ) is $o_{n/2+t}$, while that of the other agent is $o_{y}$, where $y = (t+1)\mod(n/2)$. The rest of the preferences are filled in arbitrarily. Note that this family has $2^{n/2}$ instances. 
	
	Now, let us consider an arbitrary instance $I$ in $\mathcal{F}$. It is easy to verify that $I$ has a unique rank-maximal matching described as follows: for each $t \in [n/2]$, the agent in $N_t$ with $o_{n/2 + t}$ as her second choice is matched to $o_{n/2 + t}$, while the other agent in $N_t$ is matched to $o_t$ (her top choice). 
	
	Next, we argue that on this family of instances, any online algorithm for finding an NRM matching has to make at least $2n-3$ queries. To see this, recall that an NRM matching is also NPO, and therefore, must have, because of Theorem~\ref{thm:npo-iff}, a size at least $(n-1)$ w.r.t.\ the elicited preferences. This implies that the algorithm must make at least $2n-3$ queries, since in all the blocks $N_t$ except one, the algorithm must find an agent who reveals $o_{n/2 + t}$, but the first agent to whom the algorithm asks for her second choice may end up revealing $o_{y}$, where $y = (t+1) \mod (n/2)$, thus necessitating another query. In other words, in at least $(n/2-1)$ blocks, the algorithm must make at least $4$ queries each, and in the last block, the algorithm must make at least $1$ query, which adds up to $2n-3$. 
	
	In contrast, the offline optimal may ask all the $n/2$ special agents two queries each and all the other agents a single query, thus making a total of $3n/2$ queries; it is easy to verify that the rank-maximal matching in the underlying preferences will be a necessarily rank-maximal matching given such elicited preferences. 
	
	Therefore, combining all observations above, we have that, when $n$ is even, the competitive ratio of any online algorithm is at least $\frac{2n-3}{3n/2} = \frac{4}{3} - \frac{2}{n}$ .
	
	Finally, let us consider the case when $n$ is odd and let $n = n'+1$, where $n'$ is even. We can consider an agent $a^*$ and an object $o^*$ separately and let $o^*$ be the top choice of $a^*$ but the last choice of every other agent. We use the construction described above (for the case when we have even number of agents and objects) to construct the preferences for the other agents (i.e., every agent except $a^*$). Given this, it is easy to see that both the offline optimal and the online algorithm have one additional query to make (asking $a^*$ her top choice); it then follows that the offline optimal makes $\frac{3n'}{2}+1 = \frac{3n-1}{2}$ queries, while any online algorithm must make at least $2n'-3+1 = 2n-4$ queries. Hence, when $n$ is odd, the competitive ratio is at least $\frac{2n-4}{(3n-1)/2} = \frac{4}{3} - \frac{20}{9n-3}$.
\end{proof}

Note that our lower bound converges to $\frac{4}{3}$ as $n$ goes to infinity. So, a natural question is whether one can design an online algorithm to compute an NRM matching that has constant competitive ratio. We conjecture that this should be possible, but leave it as an open problem for future work. 
\section{Discussion}\label{sec:disc}

As discussed in \Cref{sec:nrm}, the most immediate open question that stems from our work is to design an online elicitation algorithm to compute an NRM matching, and analyze its competitive ratio. In particular, we believe that there exists an algorithm with a constant competitive ratio. 

Note that our online algorithm to compute an NPO matching from \Cref{sec:npo} is a constructive procedure, and uses our result that the existence of an NPO matching can be reduced to a simple analytical condition. Given that such a simple condition seems out of reach for the case of NRM, it may be interesting to explore a different approach to designing an online algorithm to compute an NRM matching. For example, one can train a machine learning model to select which agent to query next as a function of the preferences elicited so far. At each step, we can first use this algorithm to make one more query, and then call our efficient algorithm to check if the elicited preference reveal an NRM matching. It would be interesting to study if this approach can lead to a low competitive ratio, at least in practice.

More broadly, the literature on learning a desirable matching of agents to objects from partial preferences or historical data is still in its infancy, and many interesting directions, such as studying other models of partial preferences or other models of querying agents, remain unexplored.

\section*{Acknowledgments}
Hadi Hosseini acknowledges support from NSF grant \#1850076. Nisarg Shah was partially supported by an NSERC Discovery grant. 
We are grateful to Lirong Xia for his insights in shaping the initial core of this project.
We thank the anonymous reviewers for their very helpful comments and suggestions.

\printbibliography

\end{document}